\documentclass{article}

\usepackage[affil-it]{authblk}
\usepackage[dvipsnames]{xcolor}
\usepackage{amsfonts}
\usepackage{amsmath,amsthm,amssymb,dsfont}
\usepackage{enumerate}
\usepackage{graphicx}
\usepackage{subcaption}
\usepackage[margin=3cm]{geometry}
\usepackage{lmodern}
\usepackage{url}
\usepackage{todonotes}
\usepackage{bbm}

\usepackage{tikz}
\usepackage[a-2b]{pdfx} 

\usepackage{pifont}
\usepackage{multirow}

\usepackage{titlesec}

\titleformat*{\section}{\fontsize{13pt}{16pt}\selectfont \bfseries}
\titleformat*{\subsection}{\fontsize{11pt}{15pt}\selectfont \bfseries}

\usetikzlibrary{shapes.symbols,patterns} 
\usepackage{pgfplots}
\pgfplotsset{compat=1.10}
\usepgfplotslibrary{fillbetween}

\definecolor{linkblue}{HTML}{001487}
\usepackage{hyperref}
\hypersetup{colorlinks=true,citecolor=linkblue,linkcolor=linkblue,filecolor=linkblue,urlcolor=linkblue,breaklinks=true}

\usepackage{mathtools}

\usepackage{zref-clever}
\zcsetup{nameinlink = tsingle, cap, abbrev}
\newcommand{\cref}[1]{\zcref{#1}}
\newcommand{\Cref}[1]{\zcref[S]{#1}} 

\usepackage{mdframed}
\usepackage{aligned-overset}

\theoremstyle{plain}
\newtheorem{theorem}{Theorem}[section]

\newtheorem{lemma}{Lemma}[section]

\theoremstyle{definition}

\newtheorem{remark}[theorem]{Remark}

 \usepackage{tcolorbox}
\tcolorboxenvironment{definition}{}
\tcolorboxenvironment{theorem}{colback=RoyalBlue!5!white,colframe=RoyalBlue!75!black}
\tcolorboxenvironment{lemma}{colback=yellow!5!white,colframe=yellow!75!black}
\tcolorboxenvironment{corollary}{colback=Dandelion!5!white,colframe=Dandelion!75!black}
\tcolorboxenvironment{proposition}{colback=Emerald!5!white,colframe=Emerald!75!black}
\tcolorboxenvironment{claim}{colback=pink!25!white,colframe=pink!100!black}
\tcolorboxenvironment{conjecture}{colback=red!5!white,colframe=red!75!black}
\tcolorboxenvironment{example}{colback=lightgray!5!white,colframe=lightgray!75!black}
\newtcolorbox{theorem-box}{colback=red!5!white,colframe=red!75!black}

\newcommand*{\cF}{\mathcal{F}}

\newcommand*{\cP}{\mathcal{P}}

\newcommand*{\cS}{\mathcal{S}}

\newcommand*{\id}{\mathds{1}}
\newcommand*{\poly}{\mathrm{poly}}

\newcommand*{\tr}{\mathrm{tr}}

 \allowdisplaybreaks

\title{The marginal is pretty good}

 \author{Lukas Schmitt$^{1,2}$, Gui-Long Jiang$^{3,4}$, Shi-Bing Li$^{3}$, and Joseph M. Renes$^1$}
\affil{\small $^{1}$Institute for Theoretical Physics, ETH Zurich\\
  $^{2}$IBM Quantum, IBM Research Europe -- Zurich \\
  $^{3}$Institute for Advanced Study in Mathematics, Harbin Institute of Technology, Harbin 150001, China\\
  $^{4}$School of Mathematics, Harbin Institute of Technology, Harbin 150001, China
 }
 \date{}

\begin{document}
\maketitle
\vspace{-14mm}
\begin{abstract}
One-shot information theory measures often require an optimization over states, but the form of these optimizers can be complicated or depend on the initial problem in nonlinear ways. In this note, we show that in many instances using the marginal instead of the optimal state is sufficiently good and only changes the result by a small factor. We prove that for the Petz--R\'enyi and sandwiched--R\'enyi divergences of order $\alpha\in(0,1)$, replacing the optimizing state on $B$ by the marginal $\rho_B$ results in a multiplicative overhead of at most $1/\alpha$.
\end{abstract}

\section{Introduction}
Variational expressions of the form
\begin{equation}\label{eq:variational-problem}
\min_{\sigma_B}D_\alpha(\rho_{AB}\|\tau_A\otimes\sigma_B)
\end{equation}
occur naturally in R\'enyi generalizations of conditional entropies and related one-shot information measures~\cite{Tomamichel2016}. 
Although the optimization gives the best value, its optimizer generally depends on the joint state in a nonlinear way. 
In contrast, the marginal $\rho_B$, is canonical and explicit. 
In this note we want to quantify the mistake made by choosing the marginal instead of the optimizer.

Our first main result shows that for the Petz--R\'enyi divergence and $\alpha\in(0,1)$, the marginal achieves the optimum up to the factor $1/\alpha$:
\begin{equation}
\label{eq:mainresult}
 D_\alpha(\rho_{AB}\| \tau_A\otimes\rho_B) 
\leq \frac{1}{\alpha} \min_{\sigma_B}
D_\alpha(\rho_{AB}\| \tau_A\otimes\sigma_B).
\end{equation}
This loss is independent of dimension and tends to vanish as $\alpha\to1$. We prove this by a Schatten-norm variational formula for the optimization over $\sigma_B$, and an operator inequality between a weighted marginal and $\rho_B^\alpha$ obtained from data processing.

Our second main result shows that \cref{eq:mainresult} holds for the sandwiched divergence with order $\alpha\in(0,1)$. That is,
\begin{equation}
\label{eq:mainresult2}
 \tilde{D}_\alpha(\rho_{AB}\| \tau_A\otimes\rho_B) 
\leq \frac{1}{\alpha} \min_{\sigma_B}
 \tilde{D}_\alpha(\rho_{AB}\| \tau_A\otimes\sigma_B).
\end{equation}
We briefly outline the proof of Eq.~\eqref{eq:mainresult2}. Fix a candidate state $\sigma_B$ and consider the corresponding $n$-copy problem. We first pinch $\sigma_B^{\otimes n}$ with respect to the spectral decomposition of $\rho_B^{\otimes n}$. We then apply a joint pinching associated with the resulting operator, $\rho_B^{\otimes n}$, and $\tau_A^{\otimes n}$. 
This construction reduces the problem to a commuting one while preserving the $B^n$-marginal. 
By permutation symmetry, the number of pinching blocks grows only polynomially with $n$, so the resulting $O(\log n)$ penalty vanishes after normalization by $n$. 
For $\alpha\in[1/2,1)$, the data-processing inequality for the sandwiched R{\'e}nyi divergence allows us to apply Eq.~\eqref{eq:mainresult} to the resulting commuting problem. 
For $\alpha\in(0,1/2)$, where the usual data-processing inequality is unavailable, we introduce a suitably normalized pinched reference state and establish an approximate data-processing inequality whose loss is again of order $O(\log n)$. 
Applying the same asymptotic reduction and then letting $n\to\infty$ proves Eq.~\eqref{eq:mainresult2} for the entire range $\alpha\in(0,1)$.

\section{Preliminaries}\label{sec:preliminaries}

We write $\cS(B)$ for the state space of a quantum system $B$. Powers of positive semidefinite operators are defined on their supports. 
For $\alpha\in(0,1)$ and $\rho,\sigma\in\cS(B)$, the Petz--R\'enyi divergence between $\rho$ and $\sigma$ is given by
\begin{equation}
   D_\alpha(\rho\|\sigma) =\frac{1}{\alpha-1}\log \tr\left[\rho^\alpha \sigma^{1-\alpha}\right]. 
\end{equation}
Meanwhile, $\alpha'=\frac{1-\alpha}{2\alpha}$, the sandwiched--R\'enyi divergence between $\rho$ and $\sigma$ is given by 
\begin{equation}
   \tilde{D}_\alpha(\rho\|\sigma) =\frac{1}{\alpha-1}\log \tr\left[(\sigma^{\alpha'}\rho\sigma^{\alpha'})^\alpha\right].
\end{equation}
If $\rho$ and $\sigma$ commute, the Petz and sandwiched R{\'e}nyi divergences coincide, i.e.,

\begin{equation}\label{eq:commuting-equality}
  D_\alpha(\rho\|\sigma)=\tilde{D}_\alpha(\rho\|\sigma).
\end{equation}

For a Hermitian operator $X=\sum_{i=1}^v x_i\Pi_i$, where the $x_i$ are its
distinct eigenvalues, define the spectral pinching
\begin{equation}\label{eq:pinching-def}
 \cP_X(Y):=\sum_{i=1}^v\Pi_iY\Pi_i,
\end{equation}
and let $\nu(X)$ denote the number of distinct eigenvalues of $X$.

\section{Petz--R\'enyi bound}\label{sec:petz}
We can prove \cref{eq:mainresult} with the following two lemmas.
\begin{lemma}\label{lem:var}
Let $0<\alpha<1$ and $W_B\ge 0$. Then
\begin{equation}
\max_{\sigma_B\in\cS(B)}\tr\left[W_B\sigma_B^{1-\alpha}\right] 
=\tr \left[ W_B^{1/\alpha}\right]^\alpha .
\end{equation}
\end{lemma}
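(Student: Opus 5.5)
We prove the identity by a Hölder-type upper bound that holds for every state $\sigma_B$, followed by an explicit state that attains it. The idea is to treat $\tr[W_B\sigma_B^{1-\alpha}]$ as a pairing between the Schatten classes with exponents $p=1/\alpha$ and $q=1/(1-\alpha)$, which satisfy $1/p+1/q=1$. If $W_B=0$ both sides vanish, so assume $W_B\neq 0$.

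\emph{Upper bound.} For any $\sigma_B\in\cS(B)$, Hölder's inequality for Schatten norms gives
\begin{equation*}
\tr\left[W_B\sigma_B^{1-\alpha}\right]\le \|W_B\|_{1/\alpha}\,\|\sigma_B^{1-\alpha}\|_{1/(1-\alpha)} .
\end{equation*}
Since $\sigma_B\ge0$, we have $\|\sigma_B^{1-\alpha}\|_{1/(1-\alpha)}=(\tr\sigma_B)^{1-\alpha}=1$. Since $W_B\ge0$, we have $\|W_B\|_{1/\alpha}=\tr[W_B^{1/\alpha}]^\alpha$. Hence the left-hand side is at most $\tr[W_B^{1/\alpha}]^\alpha$ for every state.

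\emph{Attaining the bound.} Take
\begin{equation*}
\sigma_B^\star:=\frac{W_B^{1/\alpha}}{\tr[W_B^{1/\alpha}]},
\end{equation*}
which is a valid state. It commutes with $W_B$, and $(\sigma_B^\star)^{1-\alpha}=W_B^{(1-\alpha)/\alpha}/\tr[W_B^{1/\alpha}]^{1-\alpha}$. Therefore
\begin{equation*}
\tr\left[W_B(\sigma_B^\star)^{1-\alpha}\right]=\frac{\tr[W_B^{1+(1-\alpha)/\alpha}]}{\tr[W_B^{1/\alpha}]^{1-\alpha}}=\frac{\tr[W_B^{1/\alpha}]}{\tr[W_B^{1/\alpha}]^{1-\alpha}}=\tr[W_B^{1/\alpha}]^\alpha,
\end{equation*}
because $1+(1-\alpha)/\alpha=1/\alpha$. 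The supremum is thus attained, so it is a maximum and the identity holds.

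\emph{Potential obstacle.} The one point needing care is the matrix Hölder inequality for positive operators, including the case where $\sigma_B$ is not full rank; recall that powers are taken on supports. To avoid relying on the general version, one can argue directly. Write $\sigma_B=\sum_j s_j|j\rangle\langle j|$ and set $w_j=\langle j|W_B|j\rangle\ge0$. Then
\begin{equation*}
\tr[W_B\sigma_B^{1-\alpha}]=\sum_j w_j s_j^{1-\alpha}.
\end{equation*}
Scalar Hölder with exponents $1/\alpha$ and $1/(1-\alpha)$ bounds this by $(\sum_j w_j^{1/\alpha})^\alpha(\sum_j s_j)^{1-\alpha}=(\sum_j w_j^{1/\alpha})^\alpha$. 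Finally, $\sum_j \langle j|W_B|j\rangle^{1/\alpha}\le \tr[W_B^{1/\alpha}]$. This is the standard fact that the diagonal of $W_B$ in any basis is majorized by its eigenvalues, combined with convexity of $x\mapsto x^{1/\alpha}$ for $1/\alpha>1$; equivalently, Jensen's inequality $\langle j|W_B|j\rangle^{1/\alpha}\le\langle j|W_B^{1/\alpha}|j\rangle$ applied to the spectral measure of $W_B$ in the vector $|j\rangle$.
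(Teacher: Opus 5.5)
Your proof is correct and follows the paper's argument: Hölder's inequality with exponents $1/\alpha$ and $1/(1-\alpha)$ gives the upper bound, and the state $W_B^{1/\alpha}/\tr[W_B^{1/\alpha}]$ attains it. Your separate handling of $W_B=0$, where the paper's optimizer is undefined, and your elementary diagonal-basis Hölder plus Jensen route to the upper bound are both valid small additions.
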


\begin{proof}
For any density operator $\sigma_B$, H\"older's inequality with exponents $p=1/\alpha$ and $q=1/(1-\alpha)$ gives
\begin{align}
\tr\left[W_B\sigma_B^{1-\alpha}\right] \overset{\mathrm{H\ddot older}}&{\leq} \|W_B\|_{1/\alpha} \, \|\sigma_B^{1-\alpha}\|_{1/(1-\alpha)} \\[1mm]
\overset{\mathrm{Def.}}&{=} \tr \left[ W_B^{1/\alpha}\right]^\alpha \tr \left[ \sigma_B\right]^{1-\alpha} \\[1mm]
\overset{\tr\, \sigma_B=1}&{=}
\tr \left[  W_B^{1/\alpha}\right]^\alpha.
\end{align}
Equality is attained by
\begin{equation}
\sigma_B^\star = \frac{W_B^{1/\alpha}}{\tr \, W_B^{1/\alpha}}.
\end{equation}
Therefore
\begin{equation} \label{eq:wb}
\max_{\sigma_B} \tr\left[W_B\sigma_B^{1-\alpha}\right] 
= \tr \left[ W_B^{1/\alpha}\right]^\alpha .
\end{equation}
\end{proof}

\begin{lemma}\label{lem:opbound}
For any $\alpha\in (0,1)$ and $\tau_A$ a density operator, define
\begin{equation}
W_\alpha
=
\tr_A\left[
(\tau_A^{\frac{1-\alpha}{2}}\otimes \id_B)
\rho_{AB}^{\alpha}
(\tau_A^{\frac{1-\alpha}{2}}\otimes \id_B)
\right].
\end{equation}
Then $W_\alpha\leq\rho_B^\alpha$.
\end{lemma}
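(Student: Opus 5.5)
We prove the operator inequality $W_\alpha\le\rho_B^\alpha$ in two Jensen-type steps, one per source of non-commutativity: first between $A$ and $B$ inside $\rho_{AB}$, then among the different blocks on $B$. The only tools are operator concavity of $t\mapsto t^\alpha$ on $[0,\infty)$ for $\alpha\in(0,1)$ (Löwner–Heinz) and the operator Jensen inequality for unital positive maps (Choi/Davis): $\Phi(X^\alpha)\le\Phi(X)^\alpha$ for $X\ge0$ and $\Phi$ unital positive.

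\textbf{Step 1 (reduce to an $A$-diagonal sum).} Diagonalize $\tau_A=\sum_i\lambda_i|i\rangle\langle i|$, with $\lambda_i\ge0$ and $\sum_i\lambda_i=1$. Then
\begin{equation*}
W_\alpha=\sum_i \lambda_i^{1-\alpha}\,\langle i|\rho_{AB}^\alpha|i\rangle ,
\end{equation*}
where $\langle i|\cdot|i\rangle$ denotes the partial matrix element $(\langle i|\otimes\id_B)(\cdot)(|i\rangle\otimes\id_B)$. Terms with $\lambda_i=0$ vanish because $1-\alpha>0$, so we drop them.

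\textbf{Step 2 (compression).} For each $i$, the map $X\mapsto\langle i|X|i\rangle$ from operators on $AB$ to operators on $B$ is a compression by the isometry $|i\rangle\otimes\id_B$. It is therefore unital and completely positive. Operator Jensen then gives
\begin{equation*}
\langle i|\rho_{AB}^\alpha|i\rangle\le \rho_i^\alpha, \qquad \rho_i:=\langle i|\rho_{AB}|i\rangle\ge0 .
\end{equation*}
Note that $\sum_i\rho_i=\rho_B$. Hence
\begin{equation*}
W_\alpha\le\sum_i\lambda_i^{1-\alpha}\rho_i^\alpha=\sum_i\lambda_i\,(\rho_i/\lambda_i)^\alpha .
\end{equation*}

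\textbf{Step 3 (weighted operator concavity).} Since $\{\lambda_i\}$ is a probability vector on the indices kept, operator concavity of $t^\alpha$ gives
\begin{equation*}
\sum_i\lambda_i(\rho_i/\lambda_i)^\alpha\le\Big(\sum_i\rho_i\Big)^\alpha .
\end{equation*}
The right-hand side is $\rho_B^\alpha$, provided the indices with $\lambda_i=0$ can be ignored.

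The only delicate point is these zero eigenvalues of $\tau_A$. The corresponding $\rho_i$ are dropped from the sum, so the bound we actually obtain is $W_\alpha\le(\rho_B-\sum_{\lambda_i=0}\rho_i)^\alpha$. Operator monotonicity of $t^\alpha$ then yields $\le\rho_B^\alpha$, since every $\rho_i\ge0$. Powers on supports cause no trouble because $t^\alpha$ is continuous at $0$ with value $0$.

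I expect Step 2 to be the main conceptual step: recognizing that the partial trace weighted by $\tau_A^{(1-\alpha)/2}$ splits into unital compressions. This is the data-processing flavour alluded to in the introduction. Everything else is standard operator concavity.
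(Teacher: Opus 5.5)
Your proof is correct, and it takes a genuinely different route from the paper.

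The paper's proof is a duality argument. It applies the data-processing inequality of the Petz--R\'enyi divergence under $\tr_A$ to get $\tr[W_\alpha\omega_B^{1-\alpha}]\le\tr[\rho_B^\alpha\omega_B^{1-\alpha}]$ for every full-rank state $\omega_B$. It then observes that the operators $\omega_B^{1-\alpha}$ exhaust all positive definite operators up to scaling, so $\rho_B^\alpha-W_\alpha\ge 0$ follows.

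You instead prove the operator inequality directly. You expand the partial trace in an eigenbasis of $\tau_A$ and apply the Davis--Choi (Hansen--Pedersen) Jensen inequality to the isometric compressions $X\mapsto(\langle i|\otimes\id_B)X(|i\rangle\otimes\id_B)$. You then finish with operator concavity and operator monotonicity of $t\mapsto t^\alpha$.

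The paper's route is shorter once the DPI is taken as a black box, and it is what motivates its ``data processing'' language. Yours is self-contained modulo L\"owner--Heinz and operator Jensen. It also buys three things:
\begin{enumerate}
\item It reproves, from elementary operator facts, exactly the special case $\tr[\rho_{AB}^\alpha(\tau_A\otimes\omega_B)^{1-\alpha}]\le\tr[\rho_B^\alpha\omega_B^{1-\alpha}]$ of the DPI that the paper invokes.
\item It handles non-full-rank $\tau_A$ with no extra work.
\item It yields the slightly sharper intermediate bound $W_\alpha\le\bigl(\tr_A[(\Pi_\tau\otimes\id_B)\rho_{AB}]\bigr)^\alpha$, where $\Pi_\tau$ is the support projection of $\tau_A$.
\end{enumerate}
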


\begin{proof}
Let $\omega_B$ be a density operator. By data processing of the Petz--Rényi divergence for $\alpha\in(0,1)$~\cite{Petz1986}, we get
\begin{equation}
D_\alpha(\rho_{AB}\| \tau_A\otimes\omega_B)
\ge
D_\alpha(\rho_B\|\omega_B).
\end{equation}
Since $\alpha-1<0$, this is equivalent to
\begin{align}
\tr\left[\rho_{AB}^{\alpha}(\tau_A\otimes\omega_B)^{1-\alpha}\right]
&\overset{\mathrm{DPI}}{\leq}\tr\left[\rho_B^\alpha\omega_B^{1-\alpha}\right].
\end{align}
Then one gets
\begin{align}
\tr\left[\rho_{AB}^{\alpha}(\tau_A\otimes\omega_B)^{1-\alpha}\right]
&= \tr\left[\rho_{AB}^{\alpha}(\tau_A^{1-\alpha}\otimes\omega_B^{1-\alpha})\right] \\[1mm]
\overset{\mathrm{cyclicity}}&{=}
\tr\left[(\tau_A^{\frac{1-\alpha}{2}}\otimes\id_B) \rho_{AB}^{\alpha}(\tau_A^{\frac{1-\alpha}{2}}\otimes\id_B)(\id_A\otimes\omega_B^{1-\alpha})\right] \\[1mm]
\overset{\mathrm{partial\ trace}}&{=} \tr\left[ W_\alpha \omega_B^{1-\alpha} \right].
\end{align}
Hence, for every full-rank density operator $\omega_B$,
\begin{equation}
\tr\left[ W_\alpha \omega_B^{1-\alpha} \right] \leq \tr\left[ \rho_B^\alpha \omega_B^{1-\alpha} \right].
\end{equation}
Because every positive definite operator $Y_B>0$ is proportional to $\omega_B^{1-\alpha}$ for a suitable full-rank state $\omega_B$, this implies
$W_\alpha\leq\rho_B^\alpha$.
\end{proof}

Now we can state and prove our first main result. 
\begin{theorem}\label{thm:petz}
Let $\alpha\in(0,1)$, let $\rho_{AB}$ be a finite-dimensional bipartite state, and let $\tau_A$ be a full-rank density operator. Then
\begin{equation}
D_\alpha(\rho_{AB}\| \tau_A\otimes\rho_B) 
\leq \frac{1}{\alpha} \min_{\sigma_B\in\cS(B)}
D_\alpha(\rho_{AB}\| \tau_A\otimes\sigma_B).
\end{equation}
\end{theorem}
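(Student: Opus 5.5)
We reduce both sides of the claimed inequality to traces involving $W_\alpha$ from \cref{lem:opbound} and then compare them. Throughout, write $Q(\sigma_B):=\tr[\rho_{AB}^\alpha(\tau_A\otimes\sigma_B)^{1-\alpha}]$. By the cyclicity and partial-trace computation already carried out in the proof of \cref{lem:opbound}, we have $Q(\sigma_B)=\tr[W_\alpha\sigma_B^{1-\alpha}]$ for every state $\sigma_B$. Since $\alpha-1<0$, minimizing $D_\alpha$ over $\sigma_B$ is the same as maximizing $Q$. By \cref{lem:var}, this gives
\begin{equation}
\min_{\sigma_B} D_\alpha(\rho_{AB}\|\tau_A\otimes\sigma_B)=\frac{\alpha}{\alpha-1}\log\tr\left[W_\alpha^{1/\alpha}\right].
\end{equation}
The right-hand side of the theorem is therefore $\frac{1}{\alpha-1}\log\tr[W_\alpha^{1/\alpha}]$. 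The claim becomes
\begin{equation}
\frac{1}{\alpha-1}\log\tr\left[W_\alpha\rho_B^{1-\alpha}\right]\le\frac{1}{\alpha-1}\log\tr\left[W_\alpha^{1/\alpha}\right],
\end{equation}
which, after multiplying by the negative number $\alpha-1$, is equivalent to the trace inequality
\begin{equation}
\tr\left[W_\alpha^{1/\alpha}\right]\le\tr\left[W_\alpha\rho_B^{1-\alpha}\right].
\end{equation}

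To prove this trace inequality I would write $W_\alpha^{1/\alpha}=W_\alpha^{1/2}W_\alpha^{\frac1\alpha-1}W_\alpha^{1/2}$ and use \cref{lem:opbound}, i.e.\ $W_\alpha\le\rho_B^\alpha$. The map $t\mapsto t^{s}$ with $s=\frac{1-\alpha}{\alpha}$ is operator monotone only for $s\le1$, that is, for $\alpha\ge1/2$. So a direct operator-monotonicity argument, deriving $W_\alpha^{s}\le\rho_B^{1-\alpha}$ and sandwiching between $W_\alpha^{1/2}$, would only cover half the range. This is the main obstacle, and I would instead argue at the level of traces.

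For the full range I would use the following chain:
\begin{equation}
\tr\left[W_\alpha^{1/\alpha}\right]=\tr\left[W_\alpha\,W_\alpha^{\frac{1-\alpha}{\alpha}}\right]\le\tr\left[W_\alpha^{1/\alpha}\right]^{\alpha}\,\tr\left[W_\alpha^{1/\alpha}\right]^{1-\alpha}.
\end{equation}
This by itself is trivial. The useful step is to apply H\"older in the form of \cref{lem:var} to the normalized state $\sigma^\star=W_\alpha^{1/\alpha}/\tr[W_\alpha^{1/\alpha}]$, combined with the bound
\begin{equation}
\tr\left[W_\alpha\sigma^{1-\alpha}\right]\le\tr\left[\rho_B^\alpha\sigma^{1-\alpha}\right]\le\tr[\rho_B]^\alpha\,\tr[\sigma]^{1-\alpha}=1.
\end{equation}
Here the first inequality is \cref{lem:opbound} and the second is H\"older again. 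Together these give $\tr[W_\alpha^{1/\alpha}]\le1$, which is the bound $T:=\tr[W_\alpha^{1/\alpha}]\le1$.

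It then remains to show $T\le\tr[W_\alpha\rho_B^{1-\alpha}]$. I would try the following route. Apply H\"older with exponents $1/\alpha$ and $1/(1-\alpha)$ in the form
\begin{equation}
\tr\left[W_\alpha^{1/\alpha}\right]=\tr\left[W_\alpha^{1/2}\rho_B^{\frac{1-\alpha}{2}}\,\rho_B^{-\frac{1-\alpha}{2}}W_\alpha^{\frac1\alpha-1}\cdots\right],
\end{equation}
using the Araki--Lieb--Thirring inequality to move powers around, together with $\rho_B^{-\alpha/2}W_\alpha\rho_B^{-\alpha/2}\le\id$. This last operator inequality is \cref{lem:opbound} rewritten, and it is available on the support of $\rho_B$, since $W_\alpha$ is supported there. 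The concrete target is the inequality $\tr[W^{1/\alpha}]\le\tr[W\rho^{1-\alpha}]$ for any $0\le W\le\rho^\alpha$. In the commuting case this follows from $w^{(1-\alpha)/\alpha}\le r^{1-\alpha}$ eigenvalue-wise. In the noncommuting case I would first try the substitution $X=\rho^{-\alpha/2}W\rho^{-\alpha/2}$, which satisfies $0\le X\le\id$, together with a Jensen-type operator inequality. This noncommuting step for $\alpha<1/2$ is the one I expect to be hardest.
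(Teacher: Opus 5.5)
Your reduction is correct and matches the paper's: via \cref{lem:var} the claim is equivalent to $\tr[W_\alpha^{1/\alpha}]\le\tr[W_\alpha\rho_B^{1-\alpha}]$. For $\alpha\in[1/2,1)$, operator monotonicity of $t\mapsto t^{(1-\alpha)/\alpha}$ applied to $W_\alpha\le\rho_B^\alpha$ settles this, as you note.

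The gap is the case $\alpha\in(0,1/2)$, and the proposal contains no argument for it. The bound $T\le 1$ is true but does not help. It only says that the optimal divergence $\frac{\alpha}{\alpha-1}\log T$ is nonnegative. It gives no comparison between $T$ and $\tr[W_\alpha\rho_B^{1-\alpha}]$, which equals $\exp\bigl((\alpha-1)D_\alpha(\rho_{AB}\|\tau_A\otimes\rho_B)\bigr)$ and so is strictly below $1$ whenever that divergence is positive. The H\"older display after it trails off in $\cdots$. The decisive step is left as an unspecified ``Jensen-type operator inequality.'' As written, the proposal proves the theorem only for $\alpha\ge 1/2$.

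The paper's missing idea is to drop to scalars in a single basis. Write $W_\alpha=\sum_i w_i|i\rangle\!\langle i|$. Then $W_\alpha\le\rho_B^\alpha$ gives $w_i\le\langle i|\rho_B^\alpha|i\rangle$. Since $s=(1-\alpha)/\alpha>1$, the map $x\mapsto x^s$ is convex and increasing. Scalar Jensen with weights $|\langle i|b_j\rangle|^2$ then yields $\langle i|\rho_B^{1-\alpha}|i\rangle\ge\langle i|\rho_B^\alpha|i\rangle^s\ge w_i^s$. Multiplying by $w_i$ and summing gives $\tr[W_\alpha\rho_B^{1-\alpha}]\ge\sum_i w_i^{1+s}=\tr[W_\alpha^{1/\alpha}]$.

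Alternatively, the two tools you named do close the gap if combined directly. Set $X=\rho_B^{-\alpha/2}W_\alpha\rho_B^{-\alpha/2}$, with the generalized inverse on the support of $\rho_B$, which contains the support of $W_\alpha$. Then $0\le X\le\id$ and $W_\alpha=\rho_B^{\alpha/2}X\rho_B^{\alpha/2}$. Araki--Lieb--Thirring with exponent $1/\alpha\ge 1$, together with $X^{1/\alpha}\le X$, gives $\tr[W_\alpha^{1/\alpha}]\le\tr[\rho_B^{1/2}X^{1/\alpha}\rho_B^{1/2}]\le\tr[\rho_B X]=\tr[W_\alpha\rho_B^{1-\alpha}]$. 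This holds for every $\alpha\in(0,1)$ at once, without the paper's case split.
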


\begin{proof}
Define
\begin{equation}
\label{eq:Wdef}
W_\alpha = \tr_A\left[ (\tau_A^{\frac{1-\alpha}{2}}\otimes \id_B) \rho_{AB}^{\alpha} (\tau_A^{\frac{1-\alpha}{2}}\otimes \id_B) \right].
\end{equation}
Then, for any state $\sigma_B$,
\begin{align}
D_\alpha(\rho_{AB}\| \tau_A\otimes\sigma_B) \overset{\mathrm{Def}}&{=} \frac1{\alpha-1} \log \tr\left[ \rho_{AB}^{\alpha} (\tau_A\otimes\sigma_B)^{1-\alpha} \right] \\
\overset{\mathrm{Def}}&{=}\frac1{\alpha-1} \log \tr\left[ W_\alpha\sigma_B^{1-\alpha} \right].
\end{align}
Hence, by the variational formula,
\begin{align}
\min_{\sigma_B \in \cS(B)} D_\alpha(\rho_{AB}\| \tau_A\otimes\sigma_B)
\overset{\mathrm{\Cref{lem:var}}}&{=}
\frac{\alpha}{\alpha-1} \log \tr W_\alpha^{1/\alpha}.
\end{align}
Therefore $\frac{1}{\alpha} \min_{\sigma_B} D_\alpha(\rho_{AB}\|\tau_A\otimes\sigma_B)
=
\frac1{\alpha-1} \log \tr[ W_\alpha^{1/\alpha}]$. 
It remains to compare $\tr [W_\alpha^{1/\alpha}]$ with
$\tr[W_\alpha\rho_B^{1-\alpha}]$. 

If $\alpha\in[\frac{1}{2},1)$, we have
$0\leq\frac{1-\alpha}{\alpha}\leq1$, and thus $x\mapsto x^{(1-\alpha)/\alpha}$ is operator monotone on $[0,\infty)$. Using~\Cref{lem:opbound},
\begin{equation}\label{eq:op}
W_\alpha^{\frac{1-\alpha}{\alpha}}
\leq
(\rho_B^\alpha)^{\frac{1-\alpha}{\alpha}}
=
\rho_B^{1-\alpha}.
\end{equation}
Consequently,
\begin{align}
\tr \, W_\alpha^{1/\alpha}
&=
\tr\left[
W_\alpha W_\alpha^{\frac{1-\alpha}{\alpha}}
\right]
\\
\overset{\mathrm{\Cref{eq:op}}}&{\le}
\tr\left[
W_\alpha\rho_B^{1-\alpha}
\right].\label{eq:hook}
\end{align}

Next, we prove that Eq.~\eqref{eq:hook} also holds when $\alpha\in(0,\frac{1}{2}).$ Write the spectral decompositions
\begin{equation}
 W_{\alpha}=\sum_i w_i|i\rangle\!\langle i|,\;\;
 \rho_B^\alpha=\sum_j \lambda_j|b_j\rangle\!\langle b_j|.
\end{equation}
\Cref{lem:opbound} implies that for every $i$ we have
\begin{equation}\label{eq:diagonal-order}
 w_i\leq\langle i|\rho_B^\alpha|i\rangle.
\end{equation}
The scalar function $x\mapsto x^{\frac{1-\alpha}{\alpha}}$ is increasing and convex because $\frac{1-\alpha}{\alpha}>1$.
We can therefore apply Jensen's inequality to obtain
\begin{align}
 \langle i| \rho_B^{1-\alpha} |i\rangle
 &=\langle i| (\rho_B^\alpha)^{\frac{1-\alpha}{\alpha}} |i\rangle \\[1mm]
 &=\sum_j |\langle i|b_j\rangle|^2 \lambda_j^{\frac{1-\alpha}{\alpha}}\\
 &\geq \Big(\sum_j |\langle i|b_j\rangle|^2 \lambda_j \Big)^{\frac{1-\alpha}{\alpha}}\\
 &=\langle i|\rho_B^\alpha|i\rangle^{\frac{1-\alpha}{\alpha}}
 \geq w_i^{\frac{1-\alpha}{\alpha}},\label{eq:scalar-jensen-step}
\end{align}
where the last inequality follows from the monotonicity of $x\mapsto x^{(1-\alpha)/\alpha}$.
Multiplying Eq.~\eqref{eq:scalar-jensen-step} by $w_i$ and summing
over $i$ gives
\begin{align}\label{eq:key-trace-lower-range}
\tr\left[W_{\alpha}\rho_B^{1-\alpha}\right]
 &=\sum_i w_i\langle i|\rho_B^{1-\alpha}|i\rangle \\
 &\geq\sum_i w_i^{1+\frac{1-\alpha}{\alpha}}
 =\tr W_{\alpha}^{1/\alpha}.
\end{align}

Since $\alpha-1<0$, applying $\frac1{\alpha-1}\log(\cdot)$ reverses the inequality:
\begin{equation}\label{eq:log}
\frac1{\alpha-1} \log \tr\left[W_\alpha\rho_B^{1-\alpha} \right]
\leq \frac{1}{\alpha-1} \log \tr [W_\alpha^{1/\alpha}].
\end{equation}
Finally,
\begin{align}
D_\alpha(\rho_{AB}\| \tau_A\otimes\rho_B)
\overset{\mathrm{Def}}&{=}
\frac1{\alpha-1} \log \tr\left[ W_\alpha\rho_B^{1-\alpha}\right]
\\
\overset{\mathrm{\Cref{eq:log}}}&{\le}
\frac1{\alpha-1} \log \tr [W_\alpha^{1/\alpha}] \\
\overset{\mathrm{\Cref{lem:var}}}&{=} \frac1\alpha \min_{\sigma_B}
D_\alpha(\rho_{AB}\| \tau_A\otimes\sigma_B).
\end{align}
This proves the claim.
\end{proof}

\begin{remark}
In the limit $\alpha \to 1$, the theorem holds as well and it is known that the optimizer is the marginal.
\end{remark}

\begin{remark}[Non-full-rank states]\label{rem:cont}
The full-rank assumption on $\tau_A$ is only used to avoid support
technicalities. The statement extends to arbitrary density operators
$\tau_A\ge 0$ by a standard continuity argument.
\end{remark}
\begin{proof}
See~\cref{app:1}
\end{proof}

\begin{remark}[Tightness of the Petz--R\'enyi bound]\label{rem:tight}
The bound is tight in certain circumstances, even for classical $\rho_{AB}$. Indeed, consider the distribution $P_{XY}$ with $\mathcal X=\mathcal Y=\{0,\dots,n-1\}$ for some integer $n$ such that $P_{XY}(0,0)=p$ with $p\in (0,1)$ and $P_{X,Y}(x,y)=(1-p)/(n^2-1)$ for $(x,y)\neq (0,0)$. 
Then for any $\alpha\in (0,1)$, $p\in (0,1)$, and $U_X$ the uniform distribution, we have 
\begin{equation}
\lim_{n\to \infty} \frac{1}{\alpha} \min_{R_Y} D_\alpha(P_{XY}\|U_X\times R_Y)-D_\alpha(P_{XY}\|U_X\times P_Y)=0\,. 
\end{equation}
\end{remark}
\begin{proof}
It is enough to show that the inequality \cref{eq:hook} is saturated.
The marginal $P_Y$ is defined by $P_Y(0)=p+(1-p)/(n+1)$ and $P_Y(y)=(1-p)n/(n^2-1)$ for $y\neq 0$. 
For convenience, define $q=(1-p)n/(n^2-1)$.
Meanwhile $W_Y(y)=n^{\alpha-1}\sum_{x\in \mathcal X}P_{XY}(x,y)^\alpha$ from \cref{eq:Wdef}.
Using the definition of $P_{XY}$, this expression is just $W_Y(0)=n^{\alpha-1}(p^\alpha+(n-1)((1-p)/(n^2-1))^\alpha)$ and $W_Y(y)=q^\alpha$ for $y\neq 0$. 
Observe that $W_Y^{1/\alpha}$ and $W_YP_Y^{1-\alpha}$ are both equal to $q$ for  $y\neq 0$. The bound then implies that $W_Y^{1/\alpha}(0)\leq W_Y(0)P_Y(0)^{1-\alpha}$, so to complete the proof it is sufficient to show that $\lim_{n\to \infty} W_Y(0)P_Y(0)^{1-\alpha}=0$. The limit of $P_Y(0)$ is just $p$, while $W_Y(0)$ tends to zero. 
\end{proof}

\section{Sandwiched--R{\'e}nyi bound}\label{sec:fidelity}
In this section, we prove Eq.~\eqref{eq:mainresult2}. We first establish the result for $\alpha\in[\frac{1}{2},1)$ using the Petz--R{\'e}nyi bound. Then, by adapting the same method, we prove the result for $\alpha\in(0,\frac{1}{2})$.

\subsection{Sandwiched--R\'enyi bound with Order \texorpdfstring{$\alpha\in[\frac{1}{2},1)$}{α in [1/2,1)}}
In this subsection, we prove Eq.~\eqref{eq:mainresult2} with order $\alpha\in[\frac{1}{2},1)$. We first recall the following standard bound that controls the loss in the sandwiched R{\'e}nyi divergence induced by pinching.
\begin{lemma}[Lemma~3 of~\cite{HayashiTomamichel2016}]\label{lem:pinching-loss}
Let $\alpha\in(0,1)$, $\rho,\sigma\in\cS(A)$, and let
$\cP(Y)=\sum_{i=1}^v\Pi_iY\Pi_i$ be a pinching satisfying $\cP(\sigma)=\sigma$.  Then
\begin{equation}\label{eq:pinching-loss}
\tilde{D}_\alpha(\rho\|\sigma)
\leq
\tilde{D}_\alpha(\cP(\rho)\|\sigma)+\log v.
\end{equation}
\end{lemma}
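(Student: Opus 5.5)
The plan is to work directly with the trace functional $Q_\alpha(\rho\|\sigma):=\tr\left[(\sigma^{\alpha'}\rho\sigma^{\alpha'})^\alpha\right]$, so that $\tilde D_\alpha=\frac{1}{\alpha-1}\log Q_\alpha$. Since $\alpha-1<0$, the claimed inequality \eqref{eq:pinching-loss} is equivalent to
\begin{equation*}
Q_\alpha(\cP(\rho)\|\sigma)\leq v^{1-\alpha}\,Q_\alpha(\rho\|\sigma).
\end{equation*}
It is therefore an \emph{upper} bound on the pinched quantity in terms of the unpinched one. We note that the familiar pinching inequality $\rho\le v\,\cP(\rho)$, combined with operator monotonicity of $x\mapsto x^\alpha$, gives the wrong direction here. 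We will use it only as a sanity check and instead argue through a unitary-mixture representation of the pinching.

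\textbf{Step 1 (pull the pinching outside).} Because $\cP(\sigma)=\sigma$, the operator $\sigma$ commutes with every projector $\Pi_i$, and hence so does $\sigma^{\alpha'}$ (powers are taken on the support, which is still a function of $\sigma$). Setting $X:=\sigma^{\alpha'}\rho\sigma^{\alpha'}\ge0$, this gives
\begin{equation*}
\sigma^{\alpha'}\cP(\rho)\sigma^{\alpha'}=\cP(X),
\end{equation*}
so that $Q_\alpha(\cP(\rho)\|\sigma)=\tr[\cP(X)^\alpha]$ and $Q_\alpha(\rho\|\sigma)=\tr[X^\alpha]$.

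\textbf{Step 2 (write the pinching as a uniform unitary mixture).} Define the unitaries $U_k=\sum_{j=1}^v e^{2\pi \mathrm{i} jk/v}\Pi_j$ for $k=1,\dots,v$. Averaging over $k$ kills the off-diagonal blocks, and a direct check shows
\begin{equation*}
\cP(X)=\frac1v\sum_{k=1}^v U_kXU_k^*.
\end{equation*}

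\textbf{Step 3 (subadditivity of the trace of a concave power).} For $\alpha\in(0,1)$ and positive semidefinite $Y_1,\dots,Y_v$, the McCarthy/Rotfel'd trace inequality states $\tr[(\sum_k Y_k)^\alpha]\le\sum_k\tr[Y_k^\alpha]$. This follows from concavity of $x^\alpha$ together with $0^\alpha=0$. We apply it with $Y_k=U_kXU_k^*/v$, and use unitary invariance of the trace to get $\tr[Y_k^\alpha]=v^{-\alpha}\tr[X^\alpha]$. Summing the $v$ terms yields
\begin{equation*}
\tr[\cP(X)^\alpha]\le v\cdot v^{-\alpha}\tr[X^\alpha]=v^{1-\alpha}\tr[X^\alpha].
\end{equation*}
Taking $\frac{1}{\alpha-1}\log$ of both sides, which reverses the inequality, gives exactly \eqref{eq:pinching-loss}.

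The main obstacle is Step 3, because it must be justified carefully: the subadditivity $\tr f(A+B)\le\tr f(A)+\tr f(B)$ for concave $f$ with $f(0)\ge0$. If we do not simply cite it, the argument we would give is the following. Write $A+B=C^*C$ with $C=\binom{A^{1/2}}{B^{1/2}}$, so that $A+B$ has the same nonzero spectrum as the block matrix $CC^*$, whose diagonal blocks are $A$ and $B$. The diagonal-block compression of a positive block matrix $M$ satisfies $\tr f(M)\le\tr f(\text{diag blocks of }M)$ for concave $f$, by the Schur--Horn majorization of the diagonal by the spectrum. Iterating this two-term bound gives the $v$-term version. The remaining steps are routine once the commutation observation of Step 1 is made.
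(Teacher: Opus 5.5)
Your proof is correct. Step~1 reduces the claim to $\tr[\cP(X)^\alpha]\le v^{1-\alpha}\tr[X^\alpha]$ for $X=\sigma^{\alpha'}\rho\sigma^{\alpha'}$. The unitary-mixture representation plus Rotfel'd--McCarthy subadditivity then gives exactly this bound. Your dilation-and-majorization sketch of the subadditivity is also sound: the $d$ extra zero eigenvalues of $CC^*$ cost nothing since $0^\alpha=0$.

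The paper does not prove this lemma; it imports it from Hayashi--Tomamichel. It does, however, prove the same core trace inequality later, in Lemma~\ref{lem:Tr-inequality}, by a different route. Starting from $X\le v\,\cP(X)$, it takes inverses and applies operator monotonicity of $x\mapsto x^{1-\alpha}$ to get $X^{\alpha-1}\ge v^{\alpha-1}\cP(X)^{\alpha-1}$. It then uses $\tr[X\,\cP(X)^{\alpha-1}]=\tr[\cP(X)^{\alpha}]$, which holds because $\cP(X)^{\alpha-1}$ is block diagonal. So your Step~1 combined with that lemma (with $s=\alpha$, $k=v$) already gives the statement. This also shows that your remark about the pinching inequality pointing the wrong way applies only to its naive pairing with the monotonicity of $x^\alpha$. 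Paired with the anti-monotonicity of $x^{\alpha-1}$, it points the right way.

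What each route buys:
\begin{itemize}
\item The paper's argument is shorter and uses only the pinching inequality and L\"owner--Heinz. But it needs invertibility, hence an $X+\varepsilon I$ regularization and a continuity step. That step is not a formality here, since $\sigma^{\alpha'}\rho\sigma^{\alpha'}$ is typically singular. It also relies on the self-duality of $\cP$ on block-diagonal operators, so it is tailored to pinchings.
\item Your argument has no support or invertibility issues. Your Step~3 gives $\tr[\Phi(X)^\alpha]\le v^{1-\alpha}\tr[X^\alpha]$ for any mixture $\Phi$ of $v$ unitary conjugations. The price is the less elementary Rotfel'd inequality.
\end{itemize}

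If you want to avoid Rotfel'd altogether, write $\cP(X)=C^*C$ with $C=\sum_i|i\rangle\otimes X^{1/2}\Pi_i$. Then $CC^*=\sum_i|i\rangle\langle i|\otimes X^{1/2}\Pi_iX^{1/2}$, so $\tr[\cP(X)^\alpha]=\sum_i\tr[(X^{1/2}\Pi_iX^{1/2})^\alpha]$. Since $\sum_iX^{1/2}\Pi_iX^{1/2}=X$, concavity of $Z\mapsto\tr[Z^\alpha]$ bounds this by $v^{1-\alpha}\tr[X^\alpha]$.
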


We also need a polynomial bound on the number of distinct eigenvalues of a
permutation-invariant operator.

\begin{lemma}[\cite{Hayashi2017}]\label{lem:polynomial-spectrum}
Let $\mathcal{H}\cong\mathbb C^d$, and let $X_n$ be a Hermitian operator on
$\mathcal{H}^{\otimes n}$ that is invariant under conjugation by every permutation of
the tensor factors.  Then
\begin{equation}\label{eq:poly-spectrum}
\nu(X_n)
\leq (n+1)^{d-1}(n+d)^{d(d-1)/2}.
\end{equation}
In particular, $\log\nu(X_n)=O(\log n)$.  For every fixed Hermitian
operator $X$ on $H$,
\begin{equation}\label{eq:tensor-spectrum}
\nu(X^{\otimes n})
\leq\binom{n+d-1}{d-1}
\leq(n+1)^{d-1}.
\end{equation}
\end{lemma}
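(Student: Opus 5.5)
We prove the two bounds separately. Both rest on the commutant of the permutation action: a permutation-invariant $X_n$ lies in the commutant of $\{\pi^{\otimes}:\pi\in S_n\}$, and every spectral projection of $X_n$ does too.

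\textbf{The bound \eqref{eq:tensor-spectrum}.} Diagonalize $X=\sum_{k=1}^d x_k|e_k\rangle\!\langle e_k|$. Then $X^{\otimes n}$ is diagonal in the product basis $|e_{k_1}\rangle\otimes\cdots\otimes|e_{k_n}\rangle$, with eigenvalue $\prod_{k} x_k^{m_k}$. Here $m_k$ counts how often the index $k$ occurs, so the eigenvalue depends only on the type $(m_1,\dots,m_d)$ with $\sum_k m_k=n$. The number of types is $\binom{n+d-1}{d-1}$, which bounds the number of distinct eigenvalues. Each $m_k$ with $k\le d-1$ lies in $\{0,\dots,n\}$ and $m_d$ is then determined, so the number of types is at most $(n+1)^{d-1}$.

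\textbf{The bound \eqref{eq:poly-spectrum}.} We use Schur--Weyl duality:
\begin{equation*}
\mathcal H^{\otimes n}\cong\bigoplus_{\lambda}\mathcal U_\lambda\otimes\mathcal V_\lambda ,
\end{equation*}
where $\lambda$ ranges over Young diagrams with $n$ boxes and at most $d$ rows. The factor $\mathcal U_\lambda$ carries the irreducible representation of $U(d)$, and $\mathcal V_\lambda$ carries that of $S_n$. Since $X_n$ commutes with all permutations, Schur's lemma gives
\begin{equation*}
X_n=\bigoplus_\lambda X_\lambda\otimes \id_{\mathcal V_\lambda},
\end{equation*}
with $X_\lambda$ Hermitian on $\mathcal U_\lambda$. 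Consequently
\begin{equation*}
\nu(X_n)\le\sum_\lambda \dim\mathcal U_\lambda .
\end{equation*}
Two standard estimates finish the count. First, the number of diagrams is at most $(n+1)^{d-1}$, since the row lengths $\lambda_1,\dots,\lambda_{d-1}$ lie in $\{0,\dots,n\}$ and $\lambda_d$ is fixed by $\sum_i\lambda_i=n$. Second, the Weyl dimension formula gives
\begin{equation*}
\dim\mathcal U_\lambda=\prod_{i<j}\frac{\lambda_i-\lambda_j+j-i}{j-i}\le (n+d)^{d(d-1)/2},
\end{equation*}
because each of the $d(d-1)/2$ factors is at most $\lambda_i-\lambda_j+j-i\le n+d$. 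Multiplying the two estimates yields \eqref{eq:poly-spectrum}. Taking logarithms gives $\log\nu(X_n)\le (d-1)\log(n+1)+\tfrac{d(d-1)}{2}\log(n+d)=O(\log n)$ for fixed $d$.

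\textbf{Main obstacle.} The only nontrivial step is the block decomposition of a permutation-invariant operator, namely the application of Schur--Weyl duality together with Schur's lemma. Once $X_n$ is written in the form $\bigoplus_\lambda X_\lambda\otimes\id_{\mathcal V_\lambda}$, the remaining work is the routine counting above.
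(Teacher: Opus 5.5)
Your proof is correct. The tensor-power bound follows from counting types in the product eigenbasis. The general bound follows from the block form $X_n=\bigoplus_\lambda X_\lambda\otimes\id_{\mathcal V_\lambda}$ given by Schur--Weyl duality and Schur's lemma, together with the $(n+1)^{d-1}$ count of Young diagrams and the Weyl dimension formula (each factor is at most $n+d-1$).

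The paper gives no proof of its own and simply imports the lemma from \cite{Hayashi2017}. As far as I can tell, the argument there is this same standard representation-theoretic count, so your route is essentially the cited one.
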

The following construction plays a crucial role in establishing the result for the sandwiched R{\'e}nyi
divergence when $\alpha\in[\frac{1}{2},1)$.  It is
adapted to one fixed candidate $\sigma_B$ and uses no universal dominating
state.

\begin{lemma}\label{lem:joint-pinching}
Let $\rho_{AB}\in\cS(AB)$,
$\tau_A\in\cS(A)$, $\sigma_B\in\cS(B)$, and fix $n\geq1$.  Define the
pinched comparison state
\begin{equation}\label{eq:pinched-comparator}
 \widehat\sigma_{B^n}:=
 \cP_{\rho_B^{\otimes n}}(\sigma_B^{\otimes n})
\end{equation}
which belongs to $\cS(B^n)$ because spectral pinching is completely
positive and trace preserving, and define the joint pinching
\begin{equation}\label{eq:joint-pinching}
 \mathcal E_n:=
 \cP_{\tau_A^{\otimes n}}^{A^n}
 \circ\cP_{\widehat\sigma_{B^n}}^{B^n}
 \circ\cP_{\rho_B^{\otimes n}}^{B^n}.
\end{equation}
The three spectral pinchings commute.  If $v_n$ is the number of nonzero projections in their common refinement, then
\begin{align}
 v_n
 &\leq \nu(\tau_A^{\otimes n})
 \nu(\rho_B^{\otimes n})
 \nu(\widehat\sigma_{B^n}) \\
 &\leq(n+1)^{d_A+d_B-2}
 \left[1+(n+1)^{d_B-1}
(n+d_B)^{d_B(d_B-1)/2}\right].
 \label{eq:joint-block-bound}
\end{align}
In particular, $\log v_n=O(\log n)$.  Moreover, for
\begin{equation}\label{eq:pinched-rho}
 \widehat\rho_{A^nB^n}:=\mathcal E_n(\rho_{AB}^{\otimes n}),
\end{equation}
we have
\begin{align}
 {\rm Tr}_{A^n}\widehat\rho_{A^nB^n}
 &=\rho_B^{\otimes n},
 \label{eq:marginal-preserved}\\[1mm]
 [\widehat\rho_{A^nB^n},
   \tau_A^{\otimes n}\otimes\rho_B^{\otimes n}]&=0,
 \label{eq:commute-marginal-reference}\\[1mm]
 [\widehat\rho_{A^nB^n},
   \tau_A^{\otimes n}\otimes\widehat\sigma_{B^n}]&=0,
 \label{eq:commute-comparator-reference}\\[1mm]
 \mathcal E_n(\tau_A^{\otimes n}\otimes\rho_B^{\otimes n})
 &=\tau_A^{\otimes n}\otimes\rho_B^{\otimes n},
 \label{eq:fix-marginal-reference}\\[1mm]
 \mathcal E_n(\tau_A^{\otimes n}\otimes\sigma_B^{\otimes n})
 &=\tau_A^{\otimes n}\otimes\widehat\sigma_{B^n},
 \label{eq:map-comparator-reference}
\end{align}
where $[\rho,\sigma]:=\rho\sigma-\sigma\rho$.
\end{lemma}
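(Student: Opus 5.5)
The plan is to check the structural claims first, namely commutation of the pinchings and the form of the common refinement, then count the blocks, and finally verify the five identities. All of them reduce to one observation: a spectral pinching $\cP_X$ fixes exactly the operators that commute with $X$, and $X$ acts as a scalar on each block of $\cP_X$.

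\emph{Commutation and common refinement.} The pinching $\cP^{A^n}_{\tau_A^{\otimes n}}$ acts on $A^n$ only, while the other two act on $B^n$, so it commutes with both. For the two $B^n$ pinchings, note that $\widehat\sigma_{B^n}=\cP_{\rho_B^{\otimes n}}(\sigma_B^{\otimes n})$ commutes with $\rho_B^{\otimes n}$, because it is block diagonal with respect to the eigenprojections $Q_j$ of $\rho_B^{\otimes n}$. Hence the eigenprojections $R_k$ of $\widehat\sigma_{B^n}$ commute with every $Q_j$, and so the two pinchings commute. The common refinement consists of the projections $\Pi_i\otimes Q_jR_k$, where $\Pi_i$ are the eigenprojections of $\tau_A^{\otimes n}$, and $\mathcal E_n(Y)=\sum_{i,j,k}(\Pi_i\otimes Q_jR_k)\,Y\,(\Pi_i\otimes Q_jR_k)$. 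This gives $v_n\le\nu(\tau_A^{\otimes n})\nu(\rho_B^{\otimes n})\nu(\widehat\sigma_{B^n})$.

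\emph{Counting.} By \eqref{eq:tensor-spectrum}, $\nu(\tau_A^{\otimes n})\le(n+1)^{d_A-1}$ and $\nu(\rho_B^{\otimes n})\le(n+1)^{d_B-1}$. For $\widehat\sigma_{B^n}$ I would show permutation invariance. A permutation unitary $U_\pi$ commutes with $\rho_B^{\otimes n}$, so it permutes nothing among the eigenprojections: $U_\pi Q_jU_\pi^\dagger=Q_j$. Since $U_\pi$ also commutes with $\sigma_B^{\otimes n}$, it follows that $U_\pi\widehat\sigma_{B^n}U_\pi^\dagger=\widehat\sigma_{B^n}$. Then \eqref{eq:poly-spectrum} bounds $\nu(\widehat\sigma_{B^n})$. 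The extra ``$1+$'' in \eqref{eq:joint-block-bound} is harmless slack; it accounts for a possible zero eigenvalue if $\nu$ is meant to count only nonzero ones. This is the step I expect to need the most care: the permutation invariance must be justified, and the counting convention must be consistent with \cref{lem:polynomial-spectrum}. The bound $\log v_n=O(\log n)$ then follows immediately.

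\emph{Identities.} For \eqref{eq:marginal-preserved}, the partial trace over $A^n$ is invariant under an $A^n$ pinching, since $\sum_i\tr_A[(\Pi_i\otimes\id)Y(\Pi_i\otimes\id)]=\tr_A Y$. Moreover, the $B^n$ pinchings commute with $\tr_{A^n}$. So $\tr_{A^n}\widehat\rho=\cP_{\widehat\sigma}\cP_{\rho_B^{\otimes n}}(\rho_B^{\otimes n})=\rho_B^{\otimes n}$, because $\rho_B^{\otimes n}$ commutes with both $\rho_B^{\otimes n}$ and $\widehat\sigma_{B^n}$. For \eqref{eq:commute-marginal-reference} and \eqref{eq:commute-comparator-reference}, $\widehat\rho$ is block diagonal with respect to the projections $\Pi_i\otimes Q_jR_k$. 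On each block, both $\tau_A^{\otimes n}\otimes\rho_B^{\otimes n}$ and $\tau_A^{\otimes n}\otimes\widehat\sigma_{B^n}$ act as scalars (using $[Q_j,R_k]=0$), so they commute with $\widehat\rho$. For \eqref{eq:fix-marginal-reference}, each factor pinching fixes an operator commuting with its defining operator, and $\tau_A^{\otimes n}\otimes\rho_B^{\otimes n}$ commutes with all three defining operators. For \eqref{eq:map-comparator-reference}, the first pinching applied, $\cP_{\rho_B^{\otimes n}}$, maps $\tau_A^{\otimes n}\otimes\sigma_B^{\otimes n}$ to $\tau_A^{\otimes n}\otimes\widehat\sigma_{B^n}$. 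That output is then fixed by $\cP_{\widehat\sigma_{B^n}}$ and by $\cP_{\tau_A^{\otimes n}}$.
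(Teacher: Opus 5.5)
Your proposal is correct and follows essentially the same route as the paper: permutation invariance of $\widehat\sigma_{B^n}$ because the permutation unitaries commute with the spectral projectors of $\rho_B^{\otimes n}$, a block count over the common refinement $\Pi_i\otimes Q_jR_k$ using the polynomial spectrum bounds, and the five identities from the fact that a spectral pinching fixes exactly the operators commuting with its defining operator. Your reading of the extra ``$1+$'' in \eqref{eq:joint-block-bound} as harmless slack also agrees with the paper, whose proof actually derives the bound without it.
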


\begin{proof}
Both $\rho_B^{\otimes n}$ and $\sigma_B^{\otimes n}$ are invariant under conjugation by every permutation of the tensor factors of $B^{\otimes n}$.
Since every spectral projector of $\rho_B^{\otimes n}$ commutes with every permutation operator, the pinched operator
\begin{equation}
\widehat\sigma_{B^n}
=
\cP_{\rho_B^{\otimes n}}(\sigma_B^{\otimes n})
\end{equation}
is permutation invariant as well. Moreover, the defining property of spectral pinching implies that
\begin{equation}
[\widehat\sigma_{B^n},\rho_B^{\otimes n}]=0.
\end{equation}
Hence the spectral projectors of $\widehat\sigma_{B^n}$ commute with those of $\rho_B^{\otimes n}$.
Since the product of two commuting projectors is still a projector and $\mathcal P_{\tau_A^{\otimes n}}$ acts on $\mathcal H^{\otimes n}_A$, whereas $\mathcal P_{\widehat{\sigma}_{B^n}}$ and $\mathcal P_{\rho_B^{\otimes n}}$ act on $\mathcal H^{\otimes n}_B$, $\mathcal E_n$ is a pinching and all three pinchings in Eq.~\eqref{eq:joint-pinching} commute.
Since the projections of the joint pinching are obtained from the nonzero products of the spectral projections of the three commuting pinchings, the number $v_n$ of projections in the pinching $\mathcal E_n$ satisfies
\begin{equation}
v_n
\leq
\nu(\tau_A^{\otimes n})
\nu(\rho_B^{\otimes n})
\nu(\widehat\sigma_{B^n}).
\end{equation}
Applying Eq.~\eqref{eq:tensor-spectrum} (cf.~\Cref{lem:polynomial-spectrum}) to
$\tau_A^{\otimes n}$ and $\rho_B^{\otimes n}$, and applying
Eq.~\eqref{eq:poly-spectrum} (cf.~\Cref{lem:polynomial-spectrum}) to the permutation-invariant operator
$\widehat\sigma_{B^n}$, yields
\begin{equation}
v_n
\leq (n+1)^{d_A+d_B-2}
\left[
(n+1)^{d_B-1}
(n+d_B)^{d_B(d_B-1)/2}
\right],
\end{equation}
which proves Eq.~\eqref{eq:joint-block-bound}. In particular, $v_n$ grows
at most polynomially in $n$, and hence
\begin{equation}
\log v_n=O(\log n).
\end{equation}

We next establish the marginal identity Eq.~\eqref{eq:marginal-preserved}.
For any operator $X_{A^nB^n}$, a trace-preserving map acting only on $A^n$ does not affect the partial trace over $A^n$.
Thus,
\begin{align}
{\rm Tr}_{A^n}\!\left[
\mathcal E_n(\rho_{AB}^{\otimes n})
\right]
&={\rm Tr}_{A^n}\!\left[\cP_{\widehat\sigma_{B^n}}^{B^n}
 \circ\cP_{\rho_B^{\otimes n}}^{B^n}(\rho_{AB}^{\otimes n})\right]\\[1mm]
 &=\cP_{\widehat\sigma_{B^n}}^{B^n}
 \circ\cP_{\rho_B^{\otimes n}}^{B^n}(\rho_B^{\otimes n})\\[1mm]
 &=\cP_{\widehat\sigma_{B^n}}^{B^n}(\rho_B^{\otimes n})
 =\rho_B^{\otimes n},
\end{align}
where the second equality follows from the fact that the two pinchings acting on $B^n$ can be applied directly to the $B^n$-marginal, the third equality uses that $\cP_{\rho_B^{\otimes n}}(\rho_B^{\otimes n})=\rho_B^{\otimes n}$, and the last equality follows from the fact that $\cP_{\widehat\sigma_{B^n}}^{B^n}$ also leaves $\rho_B^{\otimes n}$ invariant because $\rho_B^{\otimes n}$ commutes with $\widehat\sigma_{B^n}$.

Because $\widehat\rho_{A^nB^n}$ is the output of the joint pinching, it is block diagonal with respect to the common refinement of the spectral decompositions.
It therefore commutes with
$\tau_A^{\otimes n}\otimes I_{B^n}$,
$I_{A^n}\otimes\rho_B^{\otimes n}$, and
$I_{A^n}\otimes\widehat\sigma_{B^n}$.
Consequently,
\begin{align}
[\widehat\rho_{A^nB^n},
\tau_A^{\otimes n}\otimes\rho_B^{\otimes n}]
&=0,\\[1mm]
[\widehat\rho_{A^nB^n},
\tau_A^{\otimes n}\otimes\widehat\sigma_{B^n}]
&=0,
\end{align}
which proves
Eqs.~\eqref{eq:commute-marginal-reference} and
\eqref{eq:commute-comparator-reference}.

It remains to verify the action of $\mathcal E_n$ on the two reference
operators. By spectral pinching,
\begin{equation}
\cP_{\rho_B^{\otimes n}}(\rho_B^{\otimes n})
=\rho_B^{\otimes n},\;\;
\cP_{\rho_B^{\otimes n}}(\sigma_B^{\otimes n})
=\widehat\sigma_{B^n}.
\end{equation}
Furthermore, the commutation relation $[\rho_B^{\otimes n},\widehat\sigma_{B^n}]=0$ implies that each of these two operators is invariant under the spectral pinching with respect to the other operator.
 Finally,
\begin{equation}
\cP_{\tau_A^{\otimes n}}(\tau_A^{\otimes n})
=\tau_A^{\otimes n}.
\end{equation}
Applying the three commuting pinchings in
Eq.~\eqref{eq:joint-pinching} therefore gives
\begin{align}
\mathcal E_n
(\tau_A^{\otimes n}\otimes\rho_B^{\otimes n})
&=\tau_A^{\otimes n}\otimes\rho_B^{\otimes n},\\[1mm]
\mathcal E_n
(\tau_A^{\otimes n}\otimes\sigma_B^{\otimes n})
&=\tau_A^{\otimes n}\otimes\widehat\sigma_{B^n}.
\end{align}
These are precisely
Eqs.~\eqref{eq:fix-marginal-reference} and
\eqref{eq:map-comparator-reference}, completing the proof.
\end{proof}

With \Cref{lem:joint-pinching} at our disposal, we are now ready to establish the desired result for the sandwiched R{\'e}nyi divergence when $\alpha\in[\frac{1}{2},1)$.

\begin{theorem}\label{thm:sandwiched1}
Let $\alpha\in[\frac12,1)$, let $\rho_{AB}$ and $\tau_A$ be  finite-dimensional states. Then
\begin{equation}
\tilde{D}_\alpha(\rho_{AB}\| \tau_A\otimes\rho_B) 
\leq \frac{1}{\alpha} \min_{\sigma_B\in\cS(B)}
\tilde{D}_\alpha(\rho_{AB}\| \tau_A\otimes\sigma_B).
\end{equation}
\end{theorem}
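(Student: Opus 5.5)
Fix an arbitrary candidate $\sigma_B\in\cS(B)$. It suffices to show $\tilde D_\alpha(\rho_{AB}\|\tau_A\otimes\rho_B)\le \frac1\alpha\tilde D_\alpha(\rho_{AB}\|\tau_A\otimes\sigma_B)$. I would argue at the level of $n$ copies and use additivity of $\tilde D_\alpha$ under tensor products:
\begin{align}
n\tilde D_\alpha(\rho_{AB}\|\tau_A\otimes\rho_B) &= \tilde D_\alpha(\rho_{AB}^{\otimes n}\|\tau_A^{\otimes n}\otimes\rho_B^{\otimes n}),\\
n\tilde D_\alpha(\rho_{AB}\|\tau_A\otimes\sigma_B) &= \tilde D_\alpha(\rho_{AB}^{\otimes n}\|\tau_A^{\otimes n}\otimes\sigma_B^{\otimes n}).
\end{align}
Throughout, I use the pinching $\mathcal E_n$, the state $\widehat\rho_{A^nB^n}$ and the comparator $\widehat\sigma_{B^n}$ from \Cref{lem:joint-pinching}.

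\textbf{Upper bound on the marginal side.} By \cref{eq:fix-marginal-reference}, $\mathcal E_n$ fixes $\tau_A^{\otimes n}\otimes\rho_B^{\otimes n}$. So \Cref{lem:pinching-loss} applies with $\cP=\mathcal E_n$, which has $v_n$ blocks, and gives
\begin{equation}
\tilde D_\alpha(\rho_{AB}^{\otimes n}\|\tau_A^{\otimes n}\otimes\rho_B^{\otimes n})\le \tilde D_\alpha(\widehat\rho_{A^nB^n}\|\tau_A^{\otimes n}\otimes\rho_B^{\otimes n})+\log v_n.
\end{equation}
By \cref{eq:commute-marginal-reference} the two arguments on the right commute. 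Hence, by \cref{eq:commuting-equality}, the right-hand divergence equals the Petz divergence $D_\alpha(\widehat\rho_{A^nB^n}\|\tau_A^{\otimes n}\otimes\rho_B^{\otimes n})$.

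\textbf{Applying the Petz bound.} By \cref{eq:marginal-preserved}, $\rho_B^{\otimes n}$ is exactly the $B^n$-marginal of $\widehat\rho_{A^nB^n}$. I then apply \Cref{thm:petz}, extended via \Cref{rem:cont} if $\tau_A$ is not full rank, to the bipartite state $\widehat\rho_{A^nB^n}$ on $A^n B^n$. Taking the particular choice $\widehat\sigma_{B^n}$ in the minimum gives
\begin{equation}
D_\alpha(\widehat\rho_{A^nB^n}\|\tau_A^{\otimes n}\otimes\rho_B^{\otimes n})\le\frac1\alpha D_\alpha(\widehat\rho_{A^nB^n}\|\tau_A^{\otimes n}\otimes\widehat\sigma_{B^n}).
\end{equation}
By \cref{eq:commute-comparator-reference} the arguments on the right commute, so this Petz divergence equals $\tilde D_\alpha(\widehat\rho_{A^nB^n}\|\tau_A^{\otimes n}\otimes\widehat\sigma_{B^n})$.

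\textbf{Returning to $\sigma_B$.} By \cref{eq:pinched-rho,eq:map-comparator-reference}, this last quantity is
\begin{equation}
\tilde D_\alpha\bigl(\mathcal E_n(\rho_{AB}^{\otimes n})\|\mathcal E_n(\tau_A^{\otimes n}\otimes\sigma_B^{\otimes n})\bigr).
\end{equation}
Since $\alpha\ge\frac12$, the sandwiched divergence satisfies data processing under the CPTP map $\mathcal E_n$. It is therefore at most $n\tilde D_\alpha(\rho_{AB}\|\tau_A\otimes\sigma_B)$.

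\textbf{Conclusion.} Chaining the three steps and dividing by $n$ yields
\begin{equation}
\tilde D_\alpha(\rho_{AB}\|\tau_A\otimes\rho_B)\le\frac1\alpha\tilde D_\alpha(\rho_{AB}\|\tau_A\otimes\sigma_B)+\frac{\log v_n}{n}.
\end{equation}
Since $\log v_n=O(\log n)$ by \cref{eq:joint-block-bound}, the error term vanishes as $n\to\infty$. Minimizing over $\sigma_B$ then proves the theorem; the minimum exists by compactness and continuity.

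\textbf{Main obstacle.} The delicate point is the direction of the pinching inequalities. The pinching loss must be paid on the marginal side, which is exactly why $\mathcal E_n$ has to fix $\tau_A^{\otimes n}\otimes\rho_B^{\otimes n}$. The $\sigma$ side must instead go through data processing, which is what restricts this argument to $\alpha\ge\frac12$. A second point to check is that \Cref{thm:petz} is applicable at all: this relies on $\widehat\rho$ having the correct marginal, which is guaranteed by \cref{eq:marginal-preserved}. Support issues arise only if $\tau_A$ or $\rho_B$ is not full rank. I would handle these as in \Cref{rem:cont}, or by noting that all operators involved are restricted to the relevant supports.
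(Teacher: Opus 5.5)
Your proposal is correct and follows the paper's proof essentially step for step. You pay the pinching loss $\log v_n$ on the marginal side via the fixed point $\tau_A^{\otimes n}\otimes\rho_B^{\otimes n}$ of $\mathcal E_n$, then pass to the Petz divergence by commutativity, and then apply the Petz bound to $\widehat\rho_{A^nB^n}$ with the comparator $\widehat\sigma_{B^n}$. You return to $\sigma_B^{\otimes n}$ by data processing for $\alpha\ge\frac12$ and finish by letting $\log v_n/n\to0$. Your explicit appeal to the continuity remark for non-full-rank $\tau_A$ fills a small point that the paper's proof leaves implicit.
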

\begin{proof}
Fix an arbitrary state $\sigma_B\in\cS(B)$. If
\begin{equation}
\tilde{D}_\alpha
(\rho_{AB}\|\tau_A\otimes\sigma_B)=+\infty,
\end{equation}
then the desired inequality for this choice of $\sigma_B$ holds
trivially. We may thus assume that this divergence is finite.

For each $n$, we retain the notation introduced in
\Cref{lem:joint-pinching}. By the additivity of the sandwiched R{\'e}nyi divergence,
\begin{align}
n\tilde{D}_\alpha
(\rho_{AB}\|\tau_A\otimes\rho_B)
&=\tilde{D}_\alpha
\bigl(\rho_{AB}^{\otimes n}\big\|
\tau_A^{\otimes n}\otimes\rho_B^{\otimes n}\bigr)
\\[1mm]
&\leq\tilde{D}_\alpha
\bigl(\widehat\rho_{A^nB^n}\big\|
\tau_A^{\otimes n}\otimes\rho_B^{\otimes n}\bigr)
+\log v_n,
\label{eq:sandwiched-chain-one}
\end{align}
where the inequality follows from \cref{lem:pinching-loss} and
Eq.~\eqref{eq:fix-marginal-reference}.

By~Eq.~\eqref{eq:commute-marginal-reference},
$\widehat\rho_{A^nB^n}$ commutes with
$\tau_A^{\otimes n}\otimes\rho_B^{\otimes n}$. Consequently,
Eq.~\eqref{eq:commuting-equality} allows us to identify the corresponding
sandwiched R{\'e}nyi divergence with the Petz R{\'e}nyi divergence.
We therefore obtain
\begin{align}
\tilde{D}_\alpha
\bigl(\widehat\rho_{A^nB^n}\big\|
\tau_A^{\otimes n}\otimes\rho_B^{\otimes n}\bigr)
&=
D_\alpha
\bigl(\widehat\rho_{A^nB^n}\big\|
\tau_A^{\otimes n}\otimes\rho_B^{\otimes n}\bigr)
\\
&\leq\frac{1}{\alpha}
\min_{\sigma_{B^n}}
D_\alpha
\bigl(\widehat\rho_{A^nB^n}\big\|
\tau_A^{\otimes n}\otimes\sigma_{B^n}\bigr)
\\
&\leq\frac{1}{\alpha}
D_\alpha
\bigl(\widehat\rho_{A^nB^n}\big\|
\tau_A^{\otimes n}\otimes\widehat\sigma_{B^n}\bigr)
\\
&=\frac{1}{\alpha}
\tilde{D}_\alpha
\bigl(\widehat\rho_{A^nB^n}\big\|
\tau_A^{\otimes n}\otimes\widehat\sigma_{B^n}\bigr).
\label{eq:sandwiched-chain-two}
\end{align}
Here, the first inequality follows by applying
\cref{thm:petz} to the bipartition $A^n:B^n$, noting that the
$B^n$-marginal of $\widehat\rho_{A^nB^n}$ is
$\rho_B^{\otimes n}$. The final equality follows from
Eq.~\eqref{eq:commute-comparator-reference}.

Applying the data-processing inequality for the sandwiched R{\'e}nyi
divergence to the channel $\mathcal E_n$ and using
Eq.~\eqref{eq:map-comparator-reference}, we further obtain
\begin{align}
\tilde{D}_\alpha
\bigl(\widehat\rho_{A^nB^n}\big\|
\tau_A^{\otimes n}\otimes\widehat\sigma_{B^n}\bigr)
&\leq
\tilde{D}_\alpha
\bigl(\rho_{AB}^{\otimes n}\big\|
\tau_A^{\otimes n}\otimes\sigma_B^{\otimes n}\bigr)
\\[1mm]
&=n\tilde{D}_\alpha
(\rho_{AB}\|\tau_A\otimes\sigma_B),
\label{eq:sandwiched-chain-three}
\end{align}
where the equality again follows from additivity.

Combining Eqs.~\eqref{eq:sandwiched-chain-one}--
\eqref{eq:sandwiched-chain-three} and dividing by $n$ yields the
finite-blocklength estimate
\begin{equation}
\label{eq:sandwiched-finite-block}
\tilde{D}_\alpha(\rho_{AB}\|\tau_A\otimes\rho_B)
\leq\frac{1}{\alpha}
\tilde{D}_\alpha(\rho_{AB}\|\tau_A\otimes\sigma_B)
+\frac{\log v_n}{n}.
\end{equation}
By~Eq.~\eqref{eq:joint-block-bound},
\begin{equation}
\lim_{n\to\infty}\frac{\log v_n}{n}=0.
\end{equation}
Taking $n\to\infty$ in Eq.~\eqref{eq:sandwiched-finite-block} therefore
gives
\begin{equation}
\tilde{D}_\alpha(\rho_{AB}\|\tau_A\otimes\rho_B)
\leq\frac{1}{\alpha}
\tilde{D}_\alpha(\rho_{AB}\|\tau_A\otimes\sigma_B).
\end{equation}
Since $\sigma_B\in\cS(B)$ was arbitrary, taking the infimum over
$\sigma_B$ proves the asserted inequality and completes the proof.
\end{proof}

\subsection{Sandwiched--R\'enyi bound with Order \texorpdfstring{$\alpha\in(0,\frac{1}{2})$}{α in (0,1/2)}}
We now prove the desired result for the sandwiched R{\'e}nyi divergence in the range $\alpha\in(0,\frac{1}{2})$. The proof relies on the following auxiliary lemmas.

\begin{lemma}\label{lem:Tr-inequality}
Let $X\geq0$ and $s\in(0,1)$, then we have for any pinching map $P[\cdot]$ with $k$ projections:
\begin{equation}
\tr[X^s]\leq\tr[P(X)^s]\leq k^{1-s}\tr[X^s]
\end{equation} 
\end{lemma}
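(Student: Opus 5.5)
We prove the two inequalities separately. Throughout, write the pinching as $P(X)=\sum_{i=1}^k\Pi_iX\Pi_i$, with orthogonal projections $\Pi_i$ summing to the identity.

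\emph{Lower bound.} The function $t\mapsto t^s$ is concave for $s\in(0,1)$, so $X\mapsto\tr[X^s]$ is concave on positive semidefinite operators. We express the pinching as an average of unitary conjugations. Set $U_j=\sum_{i=1}^k e^{2\pi \mathrm{i}\, ij/k}\Pi_i$ for $j=1,\dots,k$. Then
\begin{equation}
P(X)=\frac1k\sum_{j=1}^k U_jXU_j^\dagger .
\end{equation}
Trace functionals are unitarily invariant, so $\tr[(U_jXU_j^\dagger)^s]=\tr[X^s]$ for every $j$. Concavity then gives $\tr[P(X)^s]\geq\frac1k\sum_j\tr[(U_jXU_j^\dagger)^s]=\tr[X^s]$. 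Alternatively, we could use majorization: the spectrum of $P(X)$ is majorized by that of $X$, and $t\mapsto t^s$ is concave.

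\emph{Upper bound.} We use the pinching inequality $X\leq k\,P(X)$, which is standard; it follows, for example, from $\sum_j U_jXU_j^\dagger\geq X$. Since $t\mapsto t^s$ is operator monotone for $s\in(0,1)$, this gives
\begin{equation}
X^s\leq k^s P(X)^s .
\end{equation}
This alone does not give the factor $k^{1-s}$, so we use the block structure of $P(X)$ directly. Because $P(X)$ is block diagonal,
\begin{equation}
\tr[P(X)^s]=\sum_{i=1}^k\tr[(\Pi_iX\Pi_i)^s].
\end{equation}
Each block satisfies $(\Pi_iX\Pi_i)^s=(\Pi_iX^{1/2}X^{1/2}\Pi_i)^s$. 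This operator has the same nonzero spectrum as $(X^{1/2}\Pi_iX^{1/2})^s$, so
\begin{equation}
\tr[P(X)^s]=\sum_{i=1}^k\tr[(X^{1/2}\Pi_iX^{1/2})^s].
\end{equation}
Now set $Y_i=X^{1/2}\Pi_iX^{1/2}\geq0$, so that $\sum_iY_i=X$. Applying the concavity inequality for trace functionals (Jensen) to the average of the $Y_i$, and using that $t^s$ is homogeneous of degree $s$, we get
\begin{equation}
\sum_i\tr[Y_i^s]=k\cdot\frac1k\sum_i\tr[Y_i^s]\leq k\,\tr\Big[\Big(\frac1k\sum_iY_i\Big)^s\Big]=k\cdot k^{-s}\tr[X^s]=k^{1-s}\tr[X^s].
\end{equation}
This is exactly the claimed upper bound.

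The main obstacle is recognizing that the upper bound needs more than operator monotonicity, which only yields $k^s$ in the wrong direction. The key step is the rewriting through the unitarily equivalent nonzero spectra of $\Pi_iX\Pi_i$ and $X^{1/2}\Pi_iX^{1/2}$, which converts a sum over blocks into concavity applied to a decomposition of $X$. Supports cause no trouble, since we only compare nonzero eigenvalues and $0^s=0$.
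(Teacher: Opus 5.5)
Your proof is correct, and for the upper bound it takes a genuinely different route from the paper.

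The paper's upper bound assumes $X>0$ and proceeds in three steps:
\begin{itemize}
\item It inverts the pinching inequality $X\leq kP(X)$ to get $X^{-1}\geq k^{-1}P(X)^{-1}$.
\item It applies operator monotonicity of $t\mapsto t^{1-s}$ to obtain $X^{s-1}\geq k^{s-1}P(X)^{s-1}$.
\item It pairs this with $X$ and uses that $P(X)^{s-1}$ is block diagonal, so $\tr[XP(X)^{s-1}]=\tr[P(X)^s]$.
\end{itemize}
Singular $X$ is then handled by passing to $X+\varepsilon I$ and letting $\varepsilon\to0$.

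You instead swap $\Pi_iX\Pi_i$ for $X^{1/2}\Pi_iX^{1/2}$, which has the same nonzero spectrum. This turns the sum over blocks into a sum over the decomposition $X=\sum_iY_i$, and you then need only concavity and degree-$s$ homogeneity of $Y\mapsto\tr[Y^s]$. Your version has several advantages:
\begin{itemize}
\item It needs no invertibility assumption or limiting argument.
\item It uses only the trace Jensen inequality for the scalar concave function $t^s$, not operator monotonicity.
\item Both halves of the lemma rest on the same single ingredient.
\item It actually shows $\tr f(P(X))\leq k\,\tr f(X/k)$ for every concave $f$ with $f(0)=0$.
\end{itemize}
The paper's version, in turn, stays within the pinching-inequality-plus-operator-monotonicity toolkit used elsewhere in the paper.

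For the lower bound, your unitary-mixing representation of $P$ plus trace concavity is a minor variant of the paper's argument. The paper invokes Jensen's operator inequality, $P(X)^s\geq P(X^s)$, and then takes the trace.

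One small correction to your closing remark. The pinching inequality together with operator monotonicity \emph{does} yield the $k^{1-s}$ bound, provided you invert first and use monotonicity of $t^{1-s}$ rather than of $t^s$. That is precisely the paper's argument.
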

\begin{proof}
The first inequality follows from Jensen's Operator inequality~\cite{HANSEN_2003} and $x\mapsto x^s$ being operator concave. For the second inequality, we first assume that $X>0$. By the pinching inequality $X\leq k P(X)$, both operators are positive definite and taking inverses reverses the order $X^{-1} \geq (kP(X))^{-1}$. Then using that $x\mapsto x^{1-s}$ is operator monotone, we get 
\begin{equation}
\tr[X^s] = \tr[X X^{s-1}] \geq k^{s-1} \, \tr[X(P(X))^{s-1}] = k^{s-1} \, \tr[P(X)^s] \, .
\end{equation}
For a general $X\geq0$, apply the above argument to
$X_\varepsilon=X+\varepsilon I>0$. Since
$P(X_\varepsilon)=P(X)+\varepsilon I$, letting
$\varepsilon \rightarrow 0$ and using the continuity of $A\mapsto\tr[A^s]$
gives the desired inequality.
\end{proof}
\begin{lemma}[Theorem~4.4 of \cite{CARLEN2016174}]\label{lem:Tr-concave}
Let $X,Y\geq0$ and $\alpha \in(0,1/2)$, then
\begin{equation}
f_\alpha(X,Y) \coloneqq \tr[(Y^{1/2}XY^{1/2})^\alpha]
\end{equation}
is jointly concave and for any pinching map $f_\alpha(P(X),P(Y)) \geq f_\alpha(X,Y)$.
\end{lemma}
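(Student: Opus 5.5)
The lemma has two parts: joint concavity of $f_\alpha$, and monotonicity under pinching. I will take joint concavity directly from the cited result and derive the pinching statement from it, using the standard representation of a pinching as a uniform average of unitary conjugations.

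\textbf{Joint concavity.} This is the case $p=q=1$ of the Carlen--Frank--Lieb theorem~\cite{CARLEN2016174}. That theorem states that
\begin{equation}
(X,Y)\mapsto \tr\bigl[(Y^{q/2}X^{p}Y^{q/2})^{s}\bigr]
\end{equation}
is jointly concave whenever $0<p,q\le 1$ and $0<s\le 1/(p+q)$. With $p=q=1$ the admissible range is $0<s\le\frac12$, which contains $s=\alpha\in(0,\frac12)$.

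For a self-contained argument I would reduce to Epstein/Lieb concavity through a variational formula. For $0<s<1$ one can write $\tr[Z^s]$ as an infimum, over $H>0$, of expressions that are affine in $Z$ with suitable trace terms in $H$. Substituting $Z=Y^{1/2}XY^{1/2}$ then requires joint concavity of a Lieb-type functional in $(X,Y)$. The threshold $\alpha\le \frac12$ is exactly where the exponent constraint in Lieb's concavity theorem is met. This is the delicate step, and I would not reprove it but rely on the citation.

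\textbf{Pinching monotonicity.}
\begin{enumerate}[1.]
\item Write the pinching as a unitary average. Let $P(\cdot)=\sum_{i=1}^k \Pi_i\cdot\Pi_i$ and $\omega=e^{2\pi \mathrm{i}/k}$. For $j=0,\dots,k-1$ define the unitaries $U_j=\sum_{i=1}^k \omega^{ij}\Pi_i$. Since $\sum_j\omega^{(i-i')j}=k\,\delta_{ii'}$, we get
\begin{equation}
P(Z)=\frac1k\sum_{j=0}^{k-1}U_jZU_j^*.
\end{equation}
\item Check unitary invariance of $f_\alpha$. For a unitary $U$ we have $(UYU^*)^{1/2}=UY^{1/2}U^*$. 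Hence $(UYU^*)^{1/2}(UXU^*)(UYU^*)^{1/2}=U\,Y^{1/2}XY^{1/2}\,U^*$, and therefore
\begin{equation}
f_\alpha(UXU^*,UYU^*)=f_\alpha(X,Y).
\end{equation}
\item Combine with joint concavity, applied to the pairs $(U_jXU_j^*,U_jYU_j^*)$:
\begin{equation}
f_\alpha(P(X),P(Y))=f_\alpha\Bigl(\tfrac1k\sum_j U_jXU_j^*,\ \tfrac1k\sum_j U_jYU_j^*\Bigr)\ \ge\ \tfrac1k\sum_j f_\alpha(U_jXU_j^*,U_jYU_j^*)=f_\alpha(X,Y).
\end{equation}
\end{enumerate}

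\textbf{Where the difficulty lies.} Once joint concavity is available, the pinching part is routine; the only subtlety is the phase-average representation of $P$. The genuine obstacle is the joint concavity itself for $\alpha\in(0,\frac12)$, which I take from~\cite{CARLEN2016174}.
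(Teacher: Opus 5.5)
Your proposal is correct and takes essentially the same route as the paper: the paper gives no proof and simply attributes the lemma to Carlen--Frank--Lieb, and joint concavity is their theorem in the case $p=q=1$, $s=\alpha\le 1/2$. The paper leaves the pinching consequence implicit, and your phase-average representation $P(Z)=\frac1k\sum_j U_jZU_j^*$ (valid because the pinchings used in the paper have projections summing to the identity), combined with unitary invariance and joint concavity, supplies exactly that standard step.
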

Using \Cref{lem:Tr-inequality} and \Cref{lem:Tr-concave}, we establish the following lemma, which is key to deriving the desired result for $\alpha\in(0,\frac{1}{2})$.
\begin{lemma}\label{lem:almostdpi}
Let $\rho,\sigma$ be states, $\alpha \in(0,1/2)$ and $\delta = \frac{1-\alpha}{\alpha}>1$. Then for any pinching map $P$ with $v$ projections, let $\eta = \frac{{P(\sigma^\delta)}^{\frac{1}{\delta}}}{z}$ with $z= \tr\left[{P(\sigma^\delta)}^\frac{1}{\delta}\right]$.
Then we have
\begin{equation}
\Tilde{D}_\alpha ( P(\rho) \| \eta) \leq \Tilde{D}_\alpha (\rho \| \sigma) + \log v
\end{equation}
\end{lemma}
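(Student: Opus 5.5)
The plan is to reduce everything to the trace functional $f_\alpha$ of \Cref{lem:Tr-concave}, with $Y=\sigma^\delta$. With $\alpha'=\frac{1-\alpha}{2\alpha}=\delta/2$ we have
\begin{equation}
\tr\left[(\sigma^{\alpha'}\rho\,\sigma^{\alpha'})^\alpha\right]=f_\alpha(\rho,\sigma^\delta).
\end{equation}
Since $\alpha-1<0$, the claimed inequality is equivalent to the multiplicative lower bound
\begin{equation}
f_\alpha\bigl(P(\rho),\eta^\delta\bigr)\;\ge\; v^{-(1-\alpha)}\, f_\alpha(\rho,\sigma^\delta).
\end{equation}
Indeed, $\frac{1}{\alpha-1}\log\bigl(v^{-(1-\alpha)}c\bigr)=\frac{1}{\alpha-1}\log c+\log v$.

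\textbf{Step 1: unwind the definition of $\eta$.} From $\eta=P(\sigma^\delta)^{1/\delta}/z$ we get $\eta^\delta=P(\sigma^\delta)/z^\delta$. Substituting,
\begin{equation}
\eta^{\delta/2}P(\rho)\eta^{\delta/2}=z^{-\delta}\,P(\sigma^\delta)^{1/2}P(\rho)P(\sigma^\delta)^{1/2}.
\end{equation}
Taking the $\alpha$-th power and the trace, and using $\delta\alpha=1-\alpha$, this gives
\begin{equation}
f_\alpha\bigl(P(\rho),\eta^\delta\bigr)=z^{-(1-\alpha)}f_\alpha\bigl(P(\rho),P(\sigma^\delta)\bigr).
\end{equation}

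\textbf{Step 2: pinching monotonicity.} By \Cref{lem:Tr-concave}, the pinching increases $f_\alpha$ for $\alpha\in(0,1/2)$, so
\begin{equation}
f_\alpha\bigl(P(\rho),P(\sigma^\delta)\bigr)\ge f_\alpha(\rho,\sigma^\delta).
\end{equation}
This is exactly where the restriction $\alpha<1/2$ enters.

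\textbf{Step 3: bound the normalization.} It remains to show $z\le v$. Apply the second inequality of \Cref{lem:Tr-inequality} with $X=\sigma^\delta$ and $s=1/\delta\in(0,1)$. This gives
\begin{equation}
z=\tr\left[P(\sigma^\delta)^{1/\delta}\right]\le v^{1-1/\delta}\,\tr\sigma=v^{1-1/\delta}\le v.
\end{equation}
Hence $z^{-(1-\alpha)}\ge v^{-(1-\alpha)}$. Combining Steps 1–3 yields the multiplicative bound, and applying $\frac{1}{\alpha-1}\log(\cdot)$ gives the lemma. In fact it gives the slightly stronger additive term $(1-1/\delta)\log v$.

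\textbf{Main obstacle.} The argument is essentially bookkeeping, so the only delicate point is supports and the case of infinite divergence. If $\sigma$ is not full rank, powers are taken on supports. Note that $P(\sigma^\delta)$ has support containing $P$ applied to the support of $\sigma$. If $\tilde D_\alpha(\rho\|\sigma)=+\infty$, i.e.\ $f_\alpha(\rho,\sigma^\delta)=0$, the claim is trivial. Otherwise I would handle non-full-rank $\sigma$ by replacing $\sigma$ with a full-rank perturbation and using continuity, as in \Cref{rem:cont}.
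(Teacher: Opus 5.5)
Your proposal is correct and is essentially the paper's proof. Both write $\tilde D_\alpha$ via $f_\alpha(\cdot,\sigma^\delta)$, apply the pinching monotonicity of \Cref{lem:Tr-concave}, extract the factor $z^{1-\alpha}$ using $P(\sigma^\delta)=z^\delta\eta^\delta$, and bound $z\le v^{1-1/\delta}\le v$ with \Cref{lem:Tr-inequality}; the paper carries the $-\log z$ term additively where you argue multiplicatively, which is the same computation. Your proposed perturbation for non-full-rank $\sigma$ is unnecessary, because every exponent involved ($\delta/2$, $1/\delta$, $\alpha$) is positive and both auxiliary lemmas are stated for arbitrary $X,Y\ge 0$.
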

\begin{proof}
Using the definition of the sandwiched R\'enyi divergence, \Cref{lem:Tr-concave}, and the relation
$P(\sigma^\delta)=z^\delta\eta^\delta$, while noting that $\frac{1}{\alpha-1}<0$, we obtain 
\begin{align}
\Tilde{D}_\alpha (\rho \| \sigma)
&= \frac{1}{\alpha-1}\log f_{\alpha}(\rho , \sigma^{\delta})\\
\overset{\mathrm{\Cref{lem:Tr-concave}}}&{\geq} \frac{1}{\alpha-1}\log f_{\alpha}(P(\rho) , P(\sigma^{\delta}))\\
&= \frac{1}{\alpha-1}\log f_{\alpha}(P(\rho) , z^{\delta}\eta^{\delta})\\
&=\frac{1}{\alpha-1}\log f_{\alpha}(P(\rho) , \eta^{\delta}) - \log z\\[1mm]
&= \Tilde{D}_\alpha ( P(\rho) \| \eta) - \log z.
\end{align}
For the normalization factor $z$, applying \Cref{lem:Tr-inequality} to $X=\sigma^\delta$ with $s=1/\delta\in(0,1)$ gives 
\begin{equation}
z= \tr\left[{P(\sigma^\delta)}^\frac{1}{\delta}\right]\overset{\mathrm{\Cref{lem:Tr-inequality}}}{\leq}
v^{1-s}\tr[\sigma]=v^{1-s} \leq v.
\end{equation}
This completes the proof.
\end{proof}
With the above lemma in hand, we can adapt the proof of \Cref{thm:sandwiched1} to extend its conclusion to the range $\alpha\in(0,\frac{1}{2})$.
\begin{theorem}\label{thm:sandwiched2}
Let $\alpha\in(0,1/2)$, let $\rho_{AB}$ and $\tau_A$ be  finite-dimensional states. Then
\begin{equation}
\tilde{D}_\alpha(\rho_{AB}\| \tau_A\otimes\rho_B) 
\leq \frac{1}{\alpha} \min_{\sigma_B\in\cS(B)}
\tilde{D}_\alpha(\rho_{AB}\| \tau_A\otimes\sigma_B).
\end{equation}
\end{theorem}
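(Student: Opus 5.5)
We follow the proof of \Cref{thm:sandwiched1}. The only step that changes is the data-processing step for the comparator, which we replace by \Cref{lem:almostdpi}. Fix $\sigma_B\in\cS(B)$ with $\tilde D_\alpha(\rho_{AB}\|\tau_A\otimes\sigma_B)<\infty$ and fix $n$. Put $\delta=\frac{1-\alpha}{\alpha}>1$.

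\textbf{Construction.} We build the pinching so that the normalized state $\eta$ of \Cref{lem:almostdpi} is again a product $\tau_A^{\otimes n}\otimes(\cdot)$.
\begin{itemize}
\item Define the comparison operator $\widehat\sigma_{B^n}:=\cP_{\rho_B^{\otimes n}}\bigl((\sigma_B^{\delta})^{\otimes n}\bigr)$. It is permutation invariant and commutes with $\rho_B^{\otimes n}$.
\item Take $\mathcal E_n:=\cP_{\tau_A^{\otimes n}}\circ\cP_{\widehat\sigma_{B^n}}\circ\cP_{\rho_B^{\otimes n}}$, exactly as in \Cref{lem:joint-pinching}.
\end{itemize}
The proof of \Cref{lem:joint-pinching} applies verbatim to this $\mathcal E_n$. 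The only input it uses is that $\widehat\sigma_{B^n}$ is permutation invariant and commutes with $\rho_B^{\otimes n}$; normalization is never used. It therefore gives:
\begin{itemize}
\item $\log v_n=O(\log n)$;
\item $\tr_{A^n}\widehat\rho_{A^nB^n}=\rho_B^{\otimes n}$, where $\widehat\rho_{A^nB^n}:=\mathcal E_n(\rho_{AB}^{\otimes n})$;
\item $\widehat\rho_{A^nB^n}$ commutes with $\tau_A^{\otimes n}\otimes\rho_B^{\otimes n}$ and with $\tau_A^{\otimes n}\otimes\widehat\sigma_{B^n}$;
\item $\mathcal E_n$ fixes $\tau_A^{\otimes n}\otimes\rho_B^{\otimes n}$.
\end{itemize}

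\textbf{The comparator.} With $\sigma:=\tau_A^{\otimes n}\otimes\sigma_B^{\otimes n}$ we have $\sigma^\delta=(\tau_A^{\delta})^{\otimes n}\otimes(\sigma_B^\delta)^{\otimes n}$. Since $\cP_{\widehat\sigma_{B^n}}$ leaves $\widehat\sigma_{B^n}$ invariant,
\[
\mathcal E_n(\sigma^\delta)=(\tau_A^{\delta})^{\otimes n}\otimes\widehat\sigma_{B^n}.
\]
Hence $\mathcal E_n(\sigma^\delta)^{1/\delta}=\tau_A^{\otimes n}\otimes\widehat\sigma_{B^n}^{1/\delta}$. The normalized state of \Cref{lem:almostdpi} is therefore $\eta=\tau_A^{\otimes n}\otimes\omega_{B^n}$ with $\omega_{B^n}:=\widehat\sigma_{B^n}^{1/\delta}/\tr\widehat\sigma_{B^n}^{1/\delta}\in\cS(B^n)$. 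This is the decisive check: $\eta$ remains of the form $\tau_A^{\otimes n}\otimes(\text{state on }B^n)$, so it is an admissible candidate in the minimization over $\sigma_{B^n}$. Because $\omega_{B^n}$ is a function of $\widehat\sigma_{B^n}$, $\widehat\rho_{A^nB^n}$ also commutes with $\eta$.

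\textbf{The chain.} Write $\tilde D_\alpha^{(1)}:=\tilde D_\alpha(\rho_{AB}\|\tau_A\otimes\rho_B)$ for brevity.
\begin{align*}
n\tilde D_\alpha^{(1)}
&=\tilde D_\alpha\bigl(\rho_{AB}^{\otimes n}\big\|\tau_A^{\otimes n}\otimes\rho_B^{\otimes n}\bigr)\\
&\le D_\alpha\bigl(\widehat\rho_{A^nB^n}\big\|\tau_A^{\otimes n}\otimes\rho_B^{\otimes n}\bigr)+\log v_n\\
&\le \tfrac1\alpha D_\alpha\bigl(\widehat\rho_{A^nB^n}\big\|\eta\bigr)+\log v_n\\
&=\tfrac1\alpha \tilde D_\alpha\bigl(\widehat\rho_{A^nB^n}\big\|\eta\bigr)+\log v_n\\
&\le \tfrac1\alpha\bigl(n\tilde D_\alpha(\rho_{AB}\|\tau_A\otimes\sigma_B)+\log v_n\bigr)+\log v_n .
\end{align*}
The justification of each step is as follows.
\begin{itemize}
\item The equality on the first line is additivity.
\item The first inequality is \Cref{lem:pinching-loss}, which needs $\mathcal E_n$ to fix the reference state. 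It then uses commutativity to replace $\tilde D_\alpha$ by $D_\alpha$ via \eqref{eq:commuting-equality}.
\item The second inequality is \cref{thm:petz} applied to the bipartition $A^n:B^n$ and restricted to the candidate $\omega_{B^n}$. This is valid because the $B^n$-marginal of $\widehat\rho_{A^nB^n}$ is $\rho_B^{\otimes n}$. If $\tau_A$ is not full rank, we invoke \Cref{rem:cont}.
\item The next equality again uses commutativity with $\eta$.
\item The last inequality is \Cref{lem:almostdpi} applied with $P=\mathcal E_n$ and additivity.
\end{itemize}

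Dividing by $n$ and letting $n\to\infty$ kills the $O(\log n)/n$ terms. Taking the infimum over $\sigma_B$ then gives the claim.

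\textbf{Main obstacle.} The main obstacle is the middle step. Pinching the original $\sigma_B^{\otimes n}$ as in \Cref{lem:joint-pinching} would \emph{not} make $\mathcal E_n(\sigma^\delta)$ a product of $\tau_A^{\delta\otimes n}$ with an operator commuting with the pinching. One would then lose control of both the commutation with $\eta$ and the tensor form of $\eta$. The remedy is to pinch $\sigma_B^{\delta\otimes n}$ instead. One must check that permutation invariance is preserved, so that \Cref{lem:polynomial-spectrum} still bounds $\nu(\widehat\sigma_{B^n})$ polynomially. One must also check that trace normalization of $\widehat\sigma_{B^n}$ is never needed in \Cref{lem:joint-pinching}.
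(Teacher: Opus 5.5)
Your proof is correct and is essentially the paper's own argument. Since $(\sigma_B^{\otimes n})^\delta=(\sigma_B^\delta)^{\otimes n}$, the paper's $\eta_n=\widehat\sigma_{B^n}^{1/\delta}/z$ is a strictly increasing function of your $\widehat\sigma_{B^n}$, so your $\mathcal E_n$ coincides with the paper's $\cF_n$. Your chain (pinching loss, \cref{thm:petz} restricted to the candidate $\eta$, then \Cref{lem:almostdpi}) is the paper's chain with the same $(1+\frac{1}{\alpha})\log v_n$ overhead. One small correction to your closing remark: pinching instead with respect to $\cP_{\rho_B^{\otimes n}}(\sigma_B^{\otimes n})$ would still give $\eta$ of the form $\tau_A^{\otimes n}\otimes(\cdot)$, because the $A^n$-pinching fixes $(\tau_A^{\delta})^{\otimes n}$; what is lost is only the guaranteed commutation of $\widehat\rho_{A^nB^n}$ with $\eta$.
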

\begin{proof}
Fix an arbitrary state $\sigma_B\in\cS(B)$. If
\begin{equation}
\tilde{D}_\alpha(\rho_{AB}\|\tau_A\otimes\sigma_B)=+\infty,
\end{equation}
then the desired inequality for this choice of $\sigma_B$ holds
trivially. We therefore assume that the divergence is finite.
For each $n>1$, 
we introduce
\begin{equation}
    \eta_n = \frac{{\mathcal P_{\rho_B^{\otimes n}}\left((\sigma_B^{\otimes n})^\delta\right)}^{\frac{1}{\delta}}}{z}, 
\end{equation}
where $\delta = \frac{1-\alpha}{\alpha}$ and $z= \tr[{\mathcal P_{\rho_B^{\otimes n}}\left((\sigma_B^{\otimes n})^\delta\right)}^{\frac{1}{\delta}}]$.
Then define
\begin{equation}
    \cF_n\coloneqq \cP_{\tau_A^{\otimes n}}^{A^n} \circ \cP_{\eta_n}^{B^n}\circ\cP_{\rho_B^{\otimes n}}^{B^n}.
\end{equation}
Since $[\eta_n,\rho_B^{\otimes n}]=0$, their pinchings commute. Hence $\cF_n$ is a pinching by their non-zero products. Analogously to~\cref{lem:joint-pinching}, we can bound the number of projections by
\begin{equation}
    v_n \leq \nu(\tau_A^{\otimes n})\nu(\rho_B^{\otimes n})\nu(\eta_n ) \leq \poly(n) ,
\end{equation}
and have 
\begin{align}
 \tr_{A^n}  \cF_n(\rho_{AB}^{\otimes n}) &= \rho_B^{\otimes n}\\[1mm]
    [\cF_n(\rho_{AB}^{\otimes n}),\tau_A^{\otimes n} \otimes \rho_B^{\otimes n}]&=0, \\ [\cF_n(\rho_{AB}^{\otimes n}),\tau_A^{\otimes n} \otimes \eta_n] &=0 , \\[1mm]
    \cF_n(\tau_A^{\otimes n} \otimes \rho_B^{\otimes n}) &= \tau_A^{\otimes n} \otimes \rho_B^{\otimes n}, \\[1mm]
    \cF_n( (\tau_A^{\otimes n} \otimes \sigma_B^{\otimes n})^\delta) &= z^\delta (\tau_A^{\otimes n} \otimes \eta_n)^\delta, 
\end{align}

Adapting the argument used in the proof of \Cref{thm:sandwiched1}, we obtain 
\begin{align}
n \tilde{D}_\alpha(\rho_{AB}\| \tau_A \otimes \rho_B) &= \tilde{D}_\alpha(\rho_{AB}^{\otimes n}\| \tau_A^{\otimes n} \otimes \rho_B^{\otimes n}) \\[1mm]
&\leq \tilde{D}_\alpha(\cF_n(\rho_{AB}^{\otimes n})\| \tau_A^{\otimes n} \otimes \rho_B^{\otimes n}) + \log v_n \\[1mm]
 &= D_\alpha(\cF_n(\rho_{AB}^{\otimes n})\| \tau_A^{\otimes n} \otimes \rho_B^{\otimes n}) + \log v_n \\ 
 &\leq \frac{1}{\alpha} \min_{\sigma_{B_n}}D_\alpha(\cF_n(\rho_{AB}^{\otimes n})\| \tau_A^{\otimes n} \otimes \sigma_{B_n}) + \log v_n \\ 
 &\leq \frac{1}{\alpha} D_\alpha(\cF_n(\rho_{AB}^{\otimes n})\| \tau_A^{\otimes n} \otimes \eta_n) + \log v_n \\ 
 &= \frac{1}{\alpha} \tilde{D}_\alpha(\cF_n(\rho_{AB}^{\otimes n})\| \tau_A^{\otimes n} \otimes \eta_n) + \log v_n \\
 &\leq \frac{1}{\alpha} \tilde{D}_\alpha(\rho_{AB}^{\otimes n}\| \tau_A^{\otimes n} \otimes \sigma_B^{\otimes n}) + \left(1+\frac{1}{\alpha}\right)\log v_n \\
 &= n \frac{1}{\alpha} \tilde{D}_\alpha(\rho_{AB}\|\tau_A \otimes \sigma_B) +  \left(1+\frac{1}{\alpha}\right)\log v_n.
\end{align}
Here, the first and final equalities follow from the additivity of the sandwiched R\'enyi divergence. The first inequality follows from~\Cref{lem:pinching-loss}. The second equality holds because $\cF_n(\rho_{AB}^{\otimes n})$ commutes with $\tau_A^{\otimes n}\otimes\rho_B^{\otimes n}$, while the second inequality follows from~\Cref{thm:petz}. The third equality holds because $\cF_n(\rho_{AB}^{\otimes n})$ commutes with $\tau_A^{\otimes n} \otimes \eta_n$, and the final inequality follows because all requirements for~\Cref{lem:almostdpi} are fulfilled. Consequently, for every $\sigma_B\in\cS(B)$,
\begin{equation}
\tilde{D}_\alpha(\rho_{AB}||\tau_A\otimes\rho_B) \leq \frac{1}{\alpha} \tilde{D}_\alpha(\rho_{AB}||\tau_A\otimes\sigma_B) + \frac{\left(1+\frac{1}{\alpha}\right)}{n}\log\poly(n).
\end{equation}
For every fixed $a\in(0,1/2)$, taking $n \to \infty$ and then the infimum over $\sigma_B$ proves the theorem.
\end{proof}

\begin{remark}[Tightness of the sandwiched--R\'enyi bound]
For $\alpha\in[1/2,1)$, the example in Remark~\ref{rem:tight} also shows that
the corresponding bound for the sandwiched R{\'e}nyi divergence is
tight. Indeed, regarding $P_{XY}$ as a classical--classical state and
applying the data-processing inequality to the dephasing channel on
$Y$, the minimization over all states $\sigma_Y$ can be restricted to
states that are diagonal in the classical basis. Since the Petz and
sandwiched R{\'e}nyi divergences coincide for commuting states, the
sandwiched quantities reduce exactly to the corresponding Petz
quantities considered in Remark~\ref{rem:tight}. The claimed tightness
therefore follows directly from Remark~\ref{rem:tight}.

For $\alpha\in(0,1/2)$, the example in Remark~\ref{rem:tight} also shows that
the sandwiched R{\'e}nyi marginal bound is tight, although the
data-processing argument used above is no longer available.
Let $\rho_{XY}$ be the classical--classical state associated with
$P_{XY}$. The sandwiched marginal bound gives
\begin{equation}
0
\leq
\frac{1}{\alpha}
\min_{\sigma_Y\in\mathcal S(Y)}
\tilde{D}_\alpha
\bigl(P_{XY}\big\|U_X\otimes\sigma_Y\bigr)
-
\tilde{D}_\alpha
\bigl(P_{XY}\big\|U_X\otimes\rho_Y\bigr).
\end{equation}
On the other hand, the diagonal states form a subset of
$\mathcal S(Y)$. Hence,
\begin{align}
&\frac{1}{\alpha}
\min_{\sigma_Y\in\mathcal S(Y)}
\tilde{D}_\alpha
\bigl(P_{XY}\big\|U_X\otimes\sigma_Y\bigr)
-
\tilde{D}_\alpha
\bigl(P_{XY}\big\|U_X\otimes\rho_Y\bigr)
\\
&\leq
\frac{1}{\alpha}
\min_{R_Y}
\tilde{D}_\alpha
\bigl(P_{XY}\big\|U_X\times R_Y\bigr)
-
\tilde{D}_\alpha
\bigl(P_{XY}\big\|U_X\times P_Y\bigr) \\
&=
\frac{1}{\alpha}
\min_{R_Y}
D_\alpha
\bigl(P_{XY}\big\|U_X\times R_Y\bigr)
-
D_\alpha
	\bigl(P_{XY}\big\|U_X\times P_Y\bigr),
\end{align}
where we used the fact that the Petz and sandwiched R{\'e}nyi
divergences coincide for commuting states. The right-hand side
converges to zero by Remark~\ref{rem:tight}. Therefore,
\begin{equation}
\lim_{n\to\infty}
\left[
\frac{1}{\alpha}
\min_{\sigma_Y\in\mathcal S(Y)}
\tilde{D}_\alpha
\bigl(\rho_{XY}\big\|U_X\otimes\sigma_Y\bigr)
-
\tilde{D}_\alpha
\bigl(\rho_{XY}\big\|U_X\otimes\rho_Y\bigr)
\right]
=0.
\end{equation}
Thus, the sandwiched R{\'e}nyi marginal bound is also tight for every
$\alpha\in(0,1/2)$.
\end{remark}

\section{AI Statement}
The authors thank ChatGPT with GPT-5.6 Sol for helping us identify the key pinching-map construction used in~\Cref{lem:joint-pinching}, which plays a crucial role in establishing the result for the sandwiched Rényi divergence when $\alpha\in (0,1)$ and further help with~\Cref{lem:almostdpi}.
All mathematical statements
and proofs were independently verified by the authors. The authors take responsibility for the final manuscript.
\section{Acknowledgments} 
We acknowledge support from the Quantum Center at ETH Zurich. This work has received funding from the Swiss State Secretariat for Education, Research and Innovation (SERI) under contract No.\ 20QU-1\_225224. The research of Gui-long Jiang and Shi-Bing Li were supported by the National
Natural Science Foundation of China (Grant No. 62571166).

\appendix

\section{Non-full-rank states}
\label{app:1}
The full-rank assumption on $\tau_A$ can be removed by continuity. Set \begin{equation} \tau_A^\varepsilon=(1-\varepsilon)\tau_A +\varepsilon\frac{\id_A}{d_A}, \qquad \varepsilon\in(0,1), \end{equation} and define \begin{equation} Q_\varepsilon(\sigma_B) := \operatorname{tr}\!\left[ \rho_{AB}^{\alpha} \bigl(\tau_A^\varepsilon\otimes\sigma_B\bigr)^{1-\alpha} \right]. \end{equation} Since $x\mapsto x^{1-\alpha}$ is continuous on the positive semidefinite cone, \begin{equation} \bigl\|(\tau_A^\varepsilon)^{1-\alpha} -\tau_A^{1-\alpha}\bigr\|_\infty\longrightarrow 0 . \end{equation} Moreover, $\|\sigma_B^{1-\alpha}\|_\infty\leq 1$ for every state $\sigma_B$. Hence $Q_\varepsilon\to Q_0$ uniformly on $\mathcal{S}(B)$, and therefore \begin{equation} \max_{\sigma_B}Q_\varepsilon(\sigma_B) \longrightarrow \max_{\sigma_B}Q_0(\sigma_B). \end{equation} Because $(\alpha-1)^{-1}<0$, minimizing the Petz--R\'enyi divergence is equivalent to maximizing $Q_\varepsilon$. Taking $\varepsilon\to0$ in the full-rank result consequently proves \begin{equation} D_\alpha(\rho_{AB}\|\tau_A\otimes\rho_B) \leq \frac{1}{\alpha} \min_{\sigma_B} D_\alpha(\rho_{AB}\|\tau_A\otimes\sigma_B) \end{equation} for arbitrary states $\tau_A\geq0$.
\bibliographystyle{arxiv_no_month}
\bibliography{bibliofile}

\end{document}